\newcommand{\rmd}{\mathrm{d}}
\newcommand{\rme}{\mathrm{e}}
\newcommand{\rmi}{\mathrm{i}}
\newcommand{\openone}{\mathds{1}}
\newcommand{\Tr}{\operatorname{Tr}}
\newcommand{\RE}{\operatorname{Re}}
\newcommand{\Ebb}{\mathbb{E}}
\newcommand{\Rbb}{\mathbb{R}}
\newcommand{\Cbb}{\mathbb{C}}
\newcommand{\Qbb}{\mathbb{Q}}
\newcommand{\Pbb}{\mathbb{P}}
\newcommand{\norm}[1]{\left\Vert#1\right\Vert}
\newcommand{\abs}[1]{\left\vert#1\right\vert}
\newcommand{\Ical}{\mathcal{I}}
\newcommand{\Kcal}{\mathcal{K}}
\newcommand{\Lcal}{\mathcal{L}}
\newcommand{\Pcal}{\mathcal{P}}
\newcommand{\Tcal}{\mathcal{T}}
 \newcommand{\Cscr}{\mathscr{C}}
\newcommand{\Fscr}{\mathscr{F}}
\newcommand{\Gscr}{\mathscr{G}}
\newcommand{\Hscr}{\mathscr{H}}
\newcommand{\Lscr}{\mathscr{L}}
\newcommand{\Sscr}{\mathscr{S}}
\newcommand{\Tscr}{\mathscr{T}}
\newcommand{\Xscr}{\mathscr{X}}
\theoremstyle{plain}
\newtheorem{theorem}{Theorem}
\newtheorem{proposition}[theorem]{Proposition}
\theoremstyle{definition}
\newtheorem{assumption}{Assumption}
\theoremstyle{remark}
\newtheorem{remark}{Remark}
\begin{document}

\title{Jump-diffusion unravelling of a non Markovian generalized Lindblad master
equation}

\author{A. Barchielli \\ Politecnico di Milano, Dipartimento di Matematica, \\ Piazza Leonardo da
Vinci 32, I-20133 Milano, Italy \\ and Istituto Nazionale di Fisica Nucleare (INFN),
Sezione di Milano
\\
\\
C. Pellegrini \\
Laboratoire de Statistique et Probabilit\'es, Universit\'e Paul Sabatier, \\ 118,
Route de Narbonne, 31062 Toulouse Cedex 4, France}

\maketitle

\begin{abstract}
The ``correlated-projection technique'' has been successfully applied to derive a
large class of highly non Markovian dynamics, the so called non Markovian
generalized Lindblad type equations or Lindblad rate equations. In this article,
general unravellings are presented for these equations, described in terms of
jump-diffusion stochastic differential equations for wave functions. We show also
that the proposed unravelling can be interpreted in terms of measurements continuous
in time, but with some conceptual restrictions. The main point in the measurement
interpretation is that the structure itself of the underlying mathematical theory
poses restrictions on what can be considered as observable and what is not; such
restrictions can be seen as the effect of some kind of superselection rule. Finally,
we develop a concrete example and we discuss possible effects on the heterodyne
spectrum of a two-level system due to a structured thermal-like bath with memory.
\end{abstract}

\noindent PACS: 42.50.Lc, 03.65.Ta, 02.50.Ey

\noindent Keywords: Lindblad rate equation; unravelling; non-Markovian stochastic
\\ Schr\"odinger equation; heterodyne detection; direct detection; quantum
\\ trajectories \& memory; time-continuous measurements.

\section{Introduction}

Open quantum system theory concentrates on the study of the time evolution a quantum
system in contact with an environment; in particular, this theory aims to describe
phenomena such as decoherence, relaxation, emission of light, evolution of
entanglement \cite{francesco2,carm1,carm2,gardiner,W3}. Starting from the
Hamiltonian approach describing the coupled evolution of the quantum system and the
environment, the reduced evolution of the quantum system is obtained by tracing out
the degrees of freedom of the environment. This allows to describe the time
evolution of the open system in terms of its density matrix $\rho_S(t)$ with the
help of a quantum master equation. Invoking standard assumptions as weak coupling
limit and Born-Markov approximation, one can derive the Markovian quantum master
equation \cite{gardiner,francesco2,W3}, with infinitesimal generator in the Lindblad
form \cite{Gor,Lind}. This approach, called the \textit{Markovian approach}, is
physically based upon the absence of memory effects in the action of the
environment. This is a good and useful assumption in several physical examples,
namely in quantum optics \cite{francesco2,carm1,carm2,gardiner,W3}.

However, such assumptions are not valid in general and in many physically important
cases the description of a reduced quantum evolution requires a non-Markovian
approach involving strong and long memory effects. For example, situations with
strong coupled systems, entanglement and correlation in the initial state, finite
reservoirs\ldots need to be described by non-Markovian dynamics. Different
techniques, such as the Nakajima-Zwanzig projection technique, the
time-convolutionless operator technique, random Lindblad operator, random functional
equations have been developed to derive non-Markovian quantum master equations
\cite{BPP,gardiner,francesco2,Man,Br2,Diosi4,Diosi2,Vac,Lidar}. Recently, the
concept of correlated projection technique has been used in order to describe a
non-Markovian generalization of Lindblad type master equations (or Lindblad rate
equations) \cite{bud,Br,Br1}. This approach has been successfully applied to
describe non-Markovian models: structured reservoirs, two-state systems coupled with
energy bands \cite{Esp1,Esp2,bud2,BrGe,bud,Br,Bud08,Bud09,BudG09,Br1,Gem}\ldots

An active line of research concentrates on the study of the behaviour of the
solutions of these equations (thermalization, return to equilibrium,
decoherence,...). But even in the Markovian case, the quantum master equations
remain often of a formal interest. In particular, most of the equations cannot be
solved analytically and involve a large number of parameters which prevent numerical
simulations. Concerning the numerical aspect, a powerful approach is the theory of
``stochastic wave function unravelling''. This consists in constructing a stochastic
differential equation for a wave function $\psi(t)$ such that
$\mathbb{E}[\vert\psi(t)\rangle\langle\psi(t)\vert]=\rho_S(t)$. Then, by taking the
average of a large number of realizations of $\psi(t)$ one reproduces the solution
of the master equation. This has been applied in many Markovian situations
\cite{MolCD,francesco2,carm2,BarGreg09}. Concerning the non-Markovian framework,
different extensions of this approach have been developed \cite{PILO,Br3,BAR3}
(there is no general and common approach).

In the Markovian case the stochastic unravelling of the master equation has not only
a technical usefulness, but it can be also interpreted in terms of measurements in
continuous time; often the name of \emph{quantum trajectory theory} is used
\cite{BarGreg09,carm2}. In particular, for quantum optical systems the stochastic
formulation is used to describe direct, heterodyne and homodyne detection. However,
in the non-Markovian setup the notion of quantum trajectories as well as the
measurement interpretation are still highly debated
\cite{PILO,Diosi1,Diosi2,Diosi4,GW5,GW6,BAR3}.

For the non Markovian generalization of Lindblad type master equations
\cite{bud,Br,Br1}, only particular unravellings have been presented \cite{MoPe,XHX}.
In this article, we aim to present a general approach to obtain unravellings for
this type of equations and to show that in this case an interpretation of the
unravelling in terms of measurements in continuous time is possible. Our approach is
based upon the general technique used to unravel Markovian Lindblad equations. In
particular, our results include and generalize the previous results \cite{MoPe}.
However, we have an important conceptual difference from the Markovian case. We are
assuming that the structure of the bath responsible of the non Markovian behaviour
is not observable and this makes unobservable some of the components of the noises
introduced in the unravelling.

The article is structured as follows. In Section \ref{nMLME}, we describe the
Lindblad rate equations. In Section \ref{unrME}, we present the jump-diffusion
unravellings of these equations. In particular, we derive non Markovian
generalizations of stochastic Schr\"odinger equations.  The stochastic master
equations and the measurement interpretation are given in Section \ref{NMGSME}. In
Section \ref{appl}, we construct a concrete non Markovian model (a two level system
in contact with a structured environment), we present a possible unravelling, and we
show possible effects of the non Markov dynamics on the heterodyne spectrum.

\section{Non-Markovian generalized Lindblad-type \\ master
equations}\label{nMLME}

In this section we introduce the non Markovian Lindblad-type master equation which
we are interested in. These equations can be obtained by the application of the
\emph{correlated projection technique} and are sometimes called \emph{Lindblad rate
equations} \cite{Esp1,Esp2,Br2,bud2,BrGe,bud,Br,Bud08,Bud09,BudG09,Br1}. For any
separable complex Hilbert space $\Hscr$ we denote by $\Lscr(\Hscr)$ the space of the
linear bounded operators on $\Hscr$, by $\Tscr(\Hscr)$ the space of trace class
operators and by $\Sscr(\Hscr)$ the set of statistical operators (a statistical
operator is a trace class, positive operator with trace $1$).

Let $\Hscr_S$ denote the Hilbert space representing the open system. The generalized
master equation we consider is the evolution equation
\begin{equation}\label{eqbreuer1}
\frac{\rmd \ }{\rmd t}\,\eta_{i}(t)= -\rmi [H^i,\eta_{i}(t)] +\sum_{\alpha\in
A}\sum_{k=1}^n\left( R^{ik}_\alpha \eta_{k}(t) {R^{ik}_\alpha}^*- \frac 1 2
\left\{{R^{ki}_\alpha}^* R^{ki}_\alpha, \eta_{i}(t)\right\}\right)
\end{equation}
for the vector $\big(\eta_{1}(t),\ldots,\eta_{n}(t)\big)$ with components in
$\Tscr(\Hscr_S)$. The quantities $H^i,\, R^{ki}_\alpha$ are system operators which
we take to be bounded for mathematical simplicity and $A$ is a finite set of
indices.

\begin{assumption}\label{ass:1} $\Hscr_S$ is a complex separable Hilbert space,
$H^i = {H^i}^*\in \Lscr(\Hscr_S)$, $R^{ki}_\alpha\in \Lscr(\Hscr_S)$,
$k,i=1,\ldots,n$, $\alpha\in A$. The initial condition of Eq.\ \eqref{eqbreuer1} has
the properties
\begin{equation}\label{cqstate}
\eta_{i}(0)\in\Tscr(\Hscr_S), \quad \eta_{i}(0)\geq 0, \quad
\sum_{i=1}^n\Tr_{\Hscr_S} \left\{\eta_{i}(0)\right\}=1.
\end{equation}
\end{assumption}

\begin{remark} Equation \eqref{eqbreuer1} preserves the properties \eqref{cqstate} at all times
\cite{Br}; then, we interprete as system state the statistical operator
\begin{equation}\label{sysstat}
\eta_S(t)=\sum_{i=1}^n\eta_{i}(t).
\end{equation}
\end{remark}

The proof of the positivity preservation property of Eq.\ \eqref{eqbreuer1} is very
instructive and goes through the embedding of the dynamics $\{\eta_i(0)\} \mapsto
\{\eta_i(t)\}$ into an usual Lindblad dynamics in an extended state space \cite{Br}.
Let us consider the enlarged space $\Hscr=\Hscr_S\otimes\mathbb{C}^n$. Let
$\{e_i,\;i=1,\ldots,n\}$ be a reference orthonormal basis of $\mathbb{C}^n$. Let us
introduce the block diagonal operator $\tilde\eta(t)$ on $\Hscr$ by
\[
\tilde\eta(t)=\sum_{i=1}^n\eta_{i}(t)\otimes |e_i\rangle \langle e_i|
\]
and set
\begin{equation}\label{HS}
H=\sum_{i=1}^n H^i\otimes
|e_i\rangle
\langle e_i|, \qquad S^{ij}_\alpha = R^{ij}_\alpha \otimes |e_i\rangle \langle e_j|.
\end{equation}
Then, we get immediately from \eqref{eqbreuer1} the evolution equation for
$\tilde\eta(t)$:
\begin{equation}\label{Lindbext}
\frac{\rmd \ }{\rmd t}\,\tilde\eta(t)=\tilde\Lcal[\tilde\eta(t)] \equiv -\rmi
[H,\tilde\eta(t)] +\sum_{\alpha\in A}\sum_{i,j=1}^n \left(S^{ij}_\alpha
\tilde\eta(t) {S^{ij}_\alpha}^*- \frac 1 2
\left\{{S^{ij}_\alpha}^*{S^{ij}_\alpha},\tilde\eta(t)\right\}\right).
\end{equation}
For block diagonal initial conditions the two equations \eqref{eqbreuer1} and
\eqref{Lindbext} are completely equivalent. Let us note that the linear map
$\tilde\Lcal$ is explicitly in the Lindblad form, so that the maps
$\tilde\eta(0)\mapsto \tilde\eta(t)$ and $\{\eta_i(0)\}_{i=1}^{\;n} \mapsto
\{\eta_i(t)\}_{i=1}^{\;n}$ are \emph{completely positive} (CP).

In spite of the construction above, the index $i$ is not interpreted as a quantum
degree of freedom, but as the value of a classical observable. In typical
applications the index $i$ labels the energy bands of a structured environment
\cite{bud,MoPe,Gem,BrGe,Br}. A vector of operators with the properties
\eqref{cqstate} can be seen as a classical/quantum state. If we set
$p_i(t)=\Tr_{\Hscr_S}\{\eta_i(t)\}$ and $\hat \eta_i(t)=\eta_i(t)/p_i(t)$, we have
$\hat \eta_i(t)\in \Sscr(\Hscr_S)$, $p_i(t)\geq 0$, $\sum_{i=1}^np_i(t)=1$. In
quantum information the set of probabilities and statistical operators
$\left\{p_i(t), \hat\eta_i(t);\,i=1,\ldots , n\right\}$ is called an \emph{ensemble}
and it is completely equivalent to the vector
$\big(\eta_{1}(t),\ldots,\eta_{n}(t)\big)$ \cite{BarL05,HolS05}. In this setup the
system state \eqref{sysstat} is known as average state and it does not contain the
information on the classical label $i$. Equation \eqref{eqbreuer1} gives a
memoryless evolution for the ensemble $\left\{p_i(t), \hat\eta_i(t);\,i=1,\ldots ,
n\right\}$; it is the evolution of the system state $\eta_S(t)$ which is non
Markovian.

In the Markov case it is well known how to construct general unravellings of a
master equation and how to give a measurement interpretation to them. So, to have an
usual master equation in Lindblad form extending our non Markovian dynamics
\eqref{eqbreuer1} is a good starting point for the whole construction. However, Eq.\
\eqref{Lindbext} is not the unique extension of \eqref{eqbreuer1} and here we give
another extension which is in some sense more convenient as starting point. The
possible extensions depend on having or not the condition $R^{ij}_\alpha\propto
\delta_{ij}$; so, we put in evidence some diagonal terms.

\begin{assumption}\label{ass:3}
Let us take $A=\{-m_1,\ldots,-1,1,\ldots,m_2\}$,  and $
R^{ij}_{-\alpha}=\delta_{ij}L^i_\alpha$, for $ \alpha=1,\ldots,m_1$.
\end{assumption}

With this assumption Eq.\ \eqref{eqbreuer1} becomes
\begin{subequations}\label{eqbreuer2}
\begin{equation}
\frac{\rmd \ }{\rmd t}\,\eta_{i}(t)=\mathcal{K}_i\big(\eta_1(t),\ldots,\eta_n(t)\big),
\end{equation}
\begin{multline}\label{K_i}
\mathcal{K}_i\big(\tau_1,\ldots,\tau_n\big) := -\rmi [H^i,\tau_{i}]
+\sum_{\alpha=1}^{m_1}\left( L^{i}_\alpha \tau_{i} {L^{i}_\alpha}^*- \frac 1 2
\left\{{L^{i}_\alpha}^* L^{i}_\alpha, \tau_{i}\right\}\right)
\\ {}+\sum_{\alpha=1}^{m_2}\sum_{k=1}^n\left( R^{ik}_\alpha \tau_{k}
{R^{ik}_\alpha}^*- \frac 1 2 \left\{{R^{ki}_\alpha}^* R^{ki}_\alpha,
\tau_{i}\right\}\right).
\end{multline}
\end{subequations}

By using the operators \eqref{HS}, we define the new operators
\begin{subequations}\label{SS}
\begin{equation}
V_\alpha=\sum_{i,j=1}^n S^{ij}_{-\alpha} \equiv \sum_{i=1}^n L^{i}_\alpha \otimes |e_i\rangle \langle e_i|,
\qquad \alpha=1,\ldots,m_1\,,
\end{equation}
\begin{equation}
S^j_\beta=\sum_{i=1}^n S^{ij}_\beta \equiv \sum_{i=1}^nR^{ij}_\beta \otimes |e_i\rangle \langle e_j|,
\qquad \beta=1,\ldots,m_2\,,
\end{equation}
\end{subequations}
and the Lindblad map $\Lcal$: $\forall\tau\in\Tscr(\Hscr)$,
\begin{equation}\label{L}\begin{split}
\Lcal[\tau]=-\rmi [H,\tau]&+\sum_{\alpha=1}^{m_1} \left(V_\alpha \tau {V_\alpha}^*-
\frac 1 2 \left\{{V_\alpha}^*{V_\alpha},\tau\right\}\right)
\\ {}&+\sum_{\alpha=1}^{m_2}\sum_{j=1}^n \left(S^{j}_\alpha \tau {S^{j}_\alpha}^*- \frac 1 2
\left\{{S^{j}_\alpha}^*{S^{j}_\alpha},\tau\right\}\right).\end{split}
\end{equation}
Then, we consider the Markovian quantum master equation
\begin{equation}\label{ME}
\frac{\rmd \ }{\rmd
t}\,\eta(t)=\Lcal[\eta(t)],
\end{equation}
with the initial condition
\begin{equation}\label{eq:incond}
\eta(0)\in\Sscr(\Hscr),  \quad
\Tr_{\Cbb^n}\left\{\eta(0)\left(\openone\otimes |e_i\rangle\langle
e_i|\right)\right\}=\eta_i(0).
\end{equation}

\begin{remark}
Let us use a subscript $i$ to denote the $i$-th block on the diagonal of any
trace-class operator, i.e. $\tau_i=\Tr_{\Cbb^n}\left\{\tau\left(\openone\otimes
|e_i\rangle\langle e_i|\right)\right\}$. It is easy to check that
\begin{equation}
\tilde\Lcal[\tau]_i=\Lcal[\tau]_i=\mathcal{K}_i(\tau_1,\ldots,\tau_n),
\end{equation}
and, so, both the master equations \eqref{Lindbext} and \eqref{ME} reduce to the
same Lindblad rate equation \eqref{eqbreuer2} for the blocks on the diagonal, while
they are different for the off-diagonal blocks. Being equal at time $t=0$ due to
\eqref{eq:incond}, we have that the blocks on the diagonal of $\eta(t)$ are exactly
the quantities $\eta_i(t)$ satisfying Eq.\ \eqref{eqbreuer2}.
\end{remark}

Another way to describe the situation is to say that there is a \emph{superselection
rule} and only block-diagonal observables are permitted. Then, statistical operators
with the same blocks on the diagonal are equivalent and represent the same physical
state. In this sense the two master equations \eqref{Lindbext} and \eqref{ME} are
physically equivalent.

It is worthwhile to note that the operator $\tilde\Lcal$ can always be written in
the form $ \Lcal$. It is enough to change the meaning of the subscript in the
operators $R^{ij}_\alpha$ or $L^i_\alpha$ in such a way that it includes also the
index $i$. Then, given two triples $(i,j,\alpha)$ and $(i',j',\alpha')$, we have
that $i\neq i' \Rightarrow \alpha\neq \alpha'$ (the same holds for two couples
$(i,\alpha)$ and $(i',\alpha')$). In this way, in the sums in Eqs.\ \eqref{SS} only
one term survives and $\tilde\Lcal=\Lcal$. So, there is no loss of generality in
considering only the master equation \eqref{ME}; the other case is always included,
eventually at the price of a renaming and reordering of the indices.

It is useful to formalize the framework we have presented in terms of normal states
on $W^*$-algebras and of CP dynamics.
\begin{remark}\label{cqalgebras}
Let $\Cscr\big(\Xscr;\Lscr(\Hscr_S)\big)$ be the $W^*$-algebra of the functions from
$\Xscr=\{1,2,\ldots,n\}$ into $\Lscr(\Hscr_S)$ \cite{BarL05,BarL06}. By natural
identifications we have $\Cscr(\Xscr;\Cbb)\simeq \Cbb^n$ and
$\Cscr\big(\Xscr;\Lscr(\Hscr_S)\big)\simeq \Lscr(\Hscr_S)\otimes \Cbb^n$, so that
$a\in \Cscr\big(\Xscr;\Lscr(\Hscr_S)\big)$ means $a=(a_1,\ldots,a_n)$, $a_j\in
\Lscr(\Hscr)$; then, $\norm{a}=\max_{j\in \Xscr}\norm{a_j}$. The predual space of
$\Cscr\big(\Xscr;\Lscr(\Hscr_S)\big)$ is $\Cscr\big(\Xscr;\Tscr(\Hscr_S)\big)\simeq
\Tscr(\Hscr_S)\otimes \Cbb^n$, so that $\tau\in \Cscr\big(\Xscr;\Tscr(\Hscr_S)\big)$
means $\tau=(\tau_1,\ldots,\tau_n)$, $\tau_j\in \Tscr(\Hscr)$; then,
$\norm{\tau}_1=\sum_{j\in \Xscr}\norm{\tau_j}_1 = \sum_{j=1}^n
\Tr_{\Hscr_S}\left\{\sqrt{\tau_j^{\,*}\tau_j}\right\}$. In a natural way $a$ and
$\tau$ can be considered as block-diagonal elements of $\Lscr(\Hscr)$ and
$\Tscr(\Hscr)$, respectively: $a\simeq \sum_{j=1}^n a_j\otimes |e_j\rangle \langle
e_j|$, $\tau\simeq \sum_{j=1}^n \tau_j\otimes |e_j\rangle \langle e_j|$.
\end{remark}
\begin{remark}\label{reduceddyn}
Equations \eqref{L} and \eqref{ME} define a CP quantum dynamical semigroup
$\Tcal(t)$ on $\Tscr(\Hscr)$. Then, we define the projection $\Pcal : \Tscr(\Hscr)
\to \Cscr\big(\Xscr;\Tscr(\Hscr_S)\big)\subset \Tscr(\Hscr)$ by $(\Pcal[\tau])_j=
\Tr_{\Cbb^n}\{\tau (\openone \otimes |e_j\rangle \langle e_j|\}$. The dynamics
associated to the Lindblad rate equation \eqref{eqbreuer2} turns out to be $\Pcal
\circ \Tcal(t)\big|_{\Cscr\big(\Xscr;\Tscr(\Hscr_S)\big)}$; it is CP and Markovian.
Finally, we define the projection $\Pcal_S : \Cscr\big(\Xscr;\Tscr(\Hscr_S)\big) \to
\Tscr(\Hscr)$ by $\Pcal_S[\tau]= \sum_j \tau_j$. The CP dynamics giving the system
state \eqref{sysstat} is $\Pcal_S\circ\Pcal \circ
\Tcal(t)\big|_{\Cscr\big(\Xscr;\Tscr(\Hscr_S)\big)}$ and it is this dynamics which
is non Markovian.
\end{remark}

\section{Unravelling of non Markovian Lindblad-type master equations}\label{unrME}

In this section, we derive a general form of jump-diffusion stochastic differential
equations (SDEs) for wave functions in the enlarged space
$\Hscr=\Hscr_S\otimes\mathbb{C}^n$ which provide unravellings of the Lindblad rate
equations \eqref{eqbreuer2}. Having at hand the usual Markovian master equation
\eqref{ME}, we adopt the usual approach \cite{BarGreg09,BarPZ98,BarP96} of
stochastic Schr\"odinger equations in the Markovian case. This method is based on
classical stochastic calculus (see for instance Refs.\ \cite{STOD,Met82} and
\cite[Appendix A]{BarGreg09}) and the notion of \emph{a posteriori} states
\cite{Bel88,Bel89,Bel89c,Bel93}.

The key point of the theory is the construction of a linear and a non-linear
\emph{stochastic Schr\"odinger equation} (SSE), connected by a normalization and a
Girsanov transformation, and, then, of the linear and non-linear \emph{stochastic
master equations}. The non-linear SSE is the key starting point for numerical
simulations of the solution of a master equation, while the possibility of passing
to linear equations is fundamental for the possibility of giving a measurement
interpretation to the whole construction without violating the rules of quantum
mechanics. Finally, the non-linear stochastic master equation gives the \emph{a
posteriori states}, the conditional state to be attributed at the system at time
$t$, knowing the results of the measurement up to time $t$.

\subsection{The linear stochastic Schr\"odinger
equation}\label{stononmark}

We consider a filtered probability space
$\big(\Omega,\Fscr,(\Fscr_t),\mathbb{Q}\big)$, satisfying the \emph{usual
hypotheses} \cite[Appendix A]{BarGreg09}. On this space, we consider $d_1+d_2\times
n$ independent standard Wiener processes $W_\alpha$, $W_\beta^j$ ($\alpha=1,\ldots ,
d_1\leq m_1$; $\beta=1,\ldots, d_2\leq m_2$; $j=1,\ldots,n$) and
$(m_1-d_1)+(m_2-d_2)\times n$ independent standard Poisson point processes
$N_\alpha$ of intensity $\lambda_\alpha>0$ and $N_\beta^j$ of intensity
$\lambda_\beta^j>0$ ($\alpha=d_1+1,\ldots, m_1$; $\beta=d_2+1,\ldots, m_2$;
$j=1,\ldots,n$), also independent of the Wiener processes. All these processes are
adapted and $W_\alpha^k(t)$, $ N_\alpha(t)-\lambda_\alpha t$ and $
N_\alpha^j(t)-\lambda_\alpha^jt$ are $(\Fscr_t)$-martingales, under the reference
probability $\mathbb{Q}$ \cite{STOD,Met82}. The trajectories of the Wiener processes
are taken to be continuous and the trajectories of the Poisson processes continuous
from the right. We set also \[
\lambda=\sum_{\alpha=d_1+1}^{m_1}\lambda_\alpha+\sum_{\alpha=d_2+1}^{m_2}\sum_{j=1}^n\lambda_\alpha^j\,.
\]

Now, on $\big(\Omega,\Fscr,(\Fscr_t),\mathbb{Q}\big)$, we consider the following SDE
for an $\Hscr$-valued process:
\begin{multline}\label{LSSE}
\rmd\zeta(t)=\left(K+\frac \lambda 2 \right)\zeta(t_{-})\rmd t
+\sum_{\alpha=1}^{d_1}V_\alpha\zeta(t_{-})\rmd W_\alpha(t)
\\ {}+\sum_{\alpha=1}^{d_2}\sum_{k=1}^nS^{k}_\alpha\zeta(t_{-})\rmd W_\alpha^k(t) +
\sum_{\alpha=d_1+1}^{m_1}\left(\frac 1{\sqrt{\lambda_\alpha}}\,V_\alpha-\openone
\right)\zeta(t_{-} )\rmd N_\alpha(t) \\ {}+
\sum_{\alpha=d_2+1}^{m_2}\sum_{k=1}^n\left(\frac
1{\sqrt{\lambda_\alpha^k}}\,S_\alpha^k-\openone \right)\zeta(t_{-} )\rmd
N_\alpha^k(t),
\end{multline}
where the operator in the drift part is given by
\begin{equation*}
K=-\rmi H-\frac{1}{2}\sum_{\alpha=1}^{m_1}{V_\alpha}^*
V_\alpha-\frac{1}{2}\sum_{\alpha=1}^{m_2}\sum_{k=1}^n{S_\alpha^{k}}^* S_\alpha^{k}
=\sum_{j=1}^n K^j\otimes |e_j\rangle \langle e_j|,
\end{equation*}
\[
K^j=-\rmi H^j-\frac{1}{2}\sum_{\alpha=1}^{m_1}{L_\alpha^{j}}^*
L_\alpha^{j}-\frac{1}{2}\sum_{\alpha=1}^{m_2}\sum_{k=1}^n{R_\alpha^{kj}}^*
R_\alpha^{kj}\,.
\]
By using the decomposition $\zeta(t)=\sum_{j=1}^n \zeta_j(t)\otimes e_j$, we get the
equivalent system of SDEs
\begin{multline}\label{LSSEi}
\rmd\zeta_j(t)=\left(K^j+\frac \lambda 2 \right)\zeta_j(t_{-})\rmd
t+\sum_{\alpha=1}^{d_1}L_\alpha^j\zeta_j(t_{-})\rmd W_\alpha(t) \\ {}+
\sum_{\alpha=d_1+1}^{m_1}\left(\frac 1{\sqrt{\lambda_\alpha}}\,L_\alpha^j-\openone
\right)\zeta_j(t_{-} )\rmd N_\alpha(t) + \sum_{\alpha=1}^{d_2}\sum_{k=1}^n
R^{jk}_\alpha\zeta_k(t_{-})\rmd W_\alpha^k(t)
\\
{}+\sum_{\alpha=d_2+1}^{m_2}\sum_{k=1}^n\left(\frac
1{\sqrt{\lambda_\alpha^k}}\,R_\alpha^{jk}\zeta_k(t_{-} )-\zeta_j(t_{-} ) \right)\rmd
N_\alpha^k(t).
\end{multline}
As usual the solutions of SDEs with jumps are taken to be continuous from the right
with left limits (c\`{a}dl\`{a}g processes); the notation $t_-$ means the left limit.

\begin{remark}\label{rem:trick}
If some of the operators $S$ in the jump part is zero, we eliminate its contribution
by taking the corresponding Poisson process with zero intensity, so that it is
almost surely 0 for all times. In other words, if we have $S^k_\alpha=0$ for some
$k$ and some $\alpha>d_2$, we take $\lambda^k_\alpha\downarrow 0$.
\end{remark}

\begin{assumption}\label{ass:2}
We take a random normalized initial condition: \ $\displaystyle
\zeta(0)=\zeta^0=\sum_{i=1}^n \zeta^0_i\otimes e_i$,\quad $\zeta^0$ is
$\Fscr_0$-measurable, \ $\displaystyle\Ebb_\Qbb\left[ \norm{\zeta^0}^2 \right]\equiv
\sum_{i=1}^n\Ebb_\Qbb\left[ \norm{\zeta^0_i}^2\right]=1$. To reproduce the initial
condition \eqref{cqstate} we ask also $\Ebb_\Qbb\left[|\zeta^0\rangle\langle
\zeta^0|\right]=\eta(0)$. Mean values of random operators are defined in weak sense.
\end{assumption}

Equation \eqref{LSSE} is a particular case of the equations studied in Refs.\
\cite{BAR3,BarPZ98}, so, we refer to those papers for the properties of its
solution, while all the results could be obtained by standard arguments in
stochastic calculus and the It\^o formula for continuous and jump processes
summarized by the \emph{It\^o table}
\begin{equation}\label{Ito}
\begin{split}
\rmd W_\alpha(t)\rmd W_\beta(t)=\delta_{\alpha \beta}\rmd t, \qquad
&\rmd W_\alpha^k(t)\rmd W_\beta^l(t)=\delta_{\alpha \beta}\delta_{kl}\rmd t,
\\
\rmd
N_\alpha(t)\rmd N_\beta(t)=\delta_{\alpha \beta}\rmd
N_\alpha(t),\qquad
&\rmd N_\alpha^i(t)\rmd N_\beta^j(t)=\delta_{\alpha \beta}\delta_{ij}\rmd
N_\alpha^i(t);
\end{split}
\end{equation}
all the other products are vanishing.

\begin{theorem}[\!{\cite[Prop.\ 2.1, Theor.\ 2.4, Prop.\ 3.2]{BAR3};
\cite[Theor.\ 1.1, Theor.\ 1.2]{BarPZ98}}] \label{MMart} Under Assumptions
\ref{ass:1} and \ref{ass:2}, the SDE \eqref{LSSE} admits a unique (up to
$\Qbb$-equivalence) solution $\zeta(t)$, $t\geq 0$. Moreover, the mean state
$\Ebb_\Qbb[|\zeta(t)\rangle \langle \zeta(t)|]$ satisfies the master equation
\eqref{ME}.

Finally, under the probability $\mathbb{Q}$, the process $\displaystyle p(t):=
\Vert\zeta(t)\Vert^2\equiv \sum_{i=1}^n \norm{\zeta_i(t)}^2$ is a non-negative
$(\mathscr{F}_t)$-martingale with $\Qbb$-mean $1$ and it satisfies the Dol\'eans SDE
\begin{multline}\label{dol}
\rmd p(t)= p(t_-)\biggl\{\sum_{\alpha=1}^{d_1} v_\alpha(t)\rmd W_\alpha(t)
+\sum_{\alpha=d_1+1}^{m_1}\left(\frac{ I_\alpha(t)}{
\lambda_\alpha}-1\right)\Big(\rmd N_\alpha(t)-\lambda_\alpha\rmd t\Big)
\\ {}+ \sum_{\alpha=1}^{d_2} \sum_{k=1}^n v_\alpha^k(t)\rmd W_\alpha^k(t)
+ \sum_{\alpha=d_2+1}^{m_2}\sum_{k=1}^n\left(\frac{ I_\alpha^k(t)}{
\lambda_\alpha^k}-1\right)\Big(\rmd N_\alpha^k(t)-\lambda_\alpha^k\rmd
t\Big)\biggr\} ,
\end{multline}
where
\begin{subequations}\label{quantity}
\begin{gather}\label{v_a}
v_\alpha(t)=2\RE\left\langle\psi(t_-)\Big|V _\alpha \psi(t_-)\right\rangle\equiv
2\sum_{j=1}^n\RE\left\langle\psi_j(t_-)\Big|L^{j} _\alpha \psi_j(t_-)\right\rangle,
\\ \label{vak}
v_\alpha^k(t)=2\RE\left\langle\psi(t_-)\Big|S^{k} _\alpha
\psi(t_-)\right\rangle\equiv 2\sum_{j=1}^n\RE\left\langle\psi_j(t_-)\Big|R^{jk}
_\alpha \psi_k(t_-)\right\rangle,
\\
I_\beta(t)=\norm{V_\beta\psi(t_-)}^2 \equiv
\sum_{j=1}^n\norm{L^{j}_\beta\psi_j(t_-)}^2,
\\
I_\beta^k(t)=\norm{S^{k}_\beta\psi(t_-)}^2 \equiv
\sum_{j=1}^n\norm{R^{jk}_\beta\psi_k(t_-)}^2.
\end{gather}
\end{subequations}
The process
\begin{subequations}\label{def:psi}
\begin{equation}
\psi(t) = \sum_{i=1}^n \psi_i(t)\otimes e_i
\end{equation}
is defined by
\begin{equation}\label{defpsik}
\begin{cases}
\displaystyle \psi_k(t)=\frac{\zeta_k(t)}{\norm{
\zeta(t)}}, &\textrm{if}\,\,\norm{\zeta(t)}\neq0,\\
\psi_k(t)=\psi, &\textrm{if}\,\,\norm{\zeta(t)}=0,
\end{cases}\end{equation}
\end{subequations}
where $\psi\in\Hscr_S$ is a fixed vector of norm $1/\sqrt{n}$.
\end{theorem}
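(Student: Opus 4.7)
The plan is to treat the four claims of the theorem separately: existence/uniqueness for \eqref{LSSE}, the mean state equation, the Doléans SDE for $p(t)=\norm{\zeta(t)}^2$, and the martingale and positivity properties of $p(t)$. The common engine is the Itô formula applied with the table \eqref{Ito}, and the common algebraic input is the identity $K+K^*=-\sum_{\alpha=1}^{m_1}V_\alpha^*V_\alpha-\sum_{\alpha=1}^{m_2}\sum_{k=1}^n(S^k_\alpha)^*S^k_\alpha$, built into the definition of $K$.

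\emph{Linear SDE and master equation.} Under Assumption \ref{ass:1} the coefficients in \eqref{LSSE} are bounded linear operators, so a Picard iteration in $L^2\bigl(\Omega;D([0,T];\Hscr)\bigr)$, combining Doob's inequality for Wiener integrals with the Kunita isometry for compensated Poisson integrals, yields a unique càdlàg $(\Fscr_t)$-adapted solution with $\Ebb_\Qbb[\norm{\zeta(t)}^2]<\infty$; this is \cite[Thm.~1.1]{BarPZ98} and \cite[Prop.~2.1]{BAR3}. For the mean equation I apply the Itô product formula to $|\zeta\rangle\langle\zeta|$: the drift $(K+\lambda/2)\zeta\rmd t$ contributes $K|\zeta\rangle\langle\zeta|+|\zeta\rangle\langle\zeta|K^*+\lambda|\zeta\rangle\langle\zeta|$; the Wiener quadratic variations contribute the Lindblad terms associated to $V_\alpha$ ($\alpha\le d_1$) and $S^k_\alpha$ ($\alpha\le d_2$); each Poisson jump of $N_\alpha$ ($\alpha>d_1$) replaces $|\zeta\rangle\langle\zeta|$ by $\lambda_\alpha^{-1}V_\alpha|\zeta\rangle\langle\zeta|V_\alpha^*$, and analogously for $N^k_\alpha$. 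Writing $\rmd N=(\rmd N-\lambda\rmd t)+\lambda\rmd t$ and taking $\Qbb$-expectation kills the martingale parts; the compensators supply precisely the missing $V_\alpha|\zeta\rangle\langle\zeta|V_\alpha^*$ and $S^k_\alpha|\zeta\rangle\langle\zeta|(S^k_\alpha)^*$ terms for the jump channels, while the total $-\lambda|\zeta\rangle\langle\zeta|$ from those compensators cancels the $+\lambda|\zeta\rangle\langle\zeta|$ artificially introduced by the $\lambda/2$ in the drift. The result is $\Lcal\bigl[\Ebb_\Qbb[|\zeta\rangle\langle\zeta|]\bigr]$ with $\Lcal$ as in \eqref{L}.

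\emph{Doléans SDE for $p(t)$.} The same Itô computation applied to $\langle\zeta|\zeta\rangle$ and using the identity for $K+K^*$ collapses the continuous part of the drift to $\bigl(\lambda-\sum_{\alpha>d_1}\norm{V_\alpha\zeta(t_-)}^2-\sum_{\alpha>d_2}\sum_k\norm{S^k_\alpha\zeta(t_-)}^2\bigr)\rmd t$. When $N_\alpha$ ($\alpha>d_1$) fires, $\zeta\mapsto\lambda_\alpha^{-1/2}V_\alpha\zeta$ and hence $p\mapsto\lambda_\alpha^{-1}\norm{V_\alpha\zeta(t_-)}^2=p(t_-)I_\alpha(t)/\lambda_\alpha$, where the last equality uses $\zeta(t_-)=\sqrt{p(t_-)}\,\psi(t_-)$ and the definition \eqref{quantity} of $I_\alpha$; analogously for $N^k_\alpha$. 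Compensating those jumps produces a drift that cancels exactly the residual continuous drift above, leaving zero deterministic part. The Wiener pieces combine as $2\RE\langle\zeta|V_\alpha\zeta\rangle=p(t_-)v_\alpha(t)$ and $2\RE\langle\zeta|S^k_\alpha\zeta\rangle=p(t_-)v^k_\alpha(t)$, and factoring $p(t_-)$ out of all four channels yields \eqref{dol}.

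\emph{Positivity, martingale property, and main obstacle.} Nonnegativity of $p$ is manifest from \eqref{dol}: the jump factors $I_\alpha/\lambda_\alpha\ge 0$ and $I^k_\alpha/\lambda^k_\alpha\ge 0$ are nonnegative, and between jumps $p$ is a continuous Wiener stochastic exponential; the fallback in \eqref{defpsik} makes $\psi,v_\alpha,v^k_\alpha,I_\alpha,I^k_\alpha$ well-defined pointwise and does not affect the equation on $\{p=0\}$, where the right-hand side of \eqref{dol} vanishes trivially. Zero drift in \eqref{dol} together with boundedness of the coefficients makes $p$ a local $(\Fscr_t)$-martingale; the $L^1$ bound $\Ebb_\Qbb[p(t)]=\Tr_{\Hscr}\Ebb_\Qbb[|\zeta(t)\rangle\langle\zeta(t)|]=\Tr_{\Hscr}\eta(0)=1$, obtained by tracing the master equation of the previous step and using Assumption \ref{ass:2}, upgrades it to a true martingale with constant mean $1$. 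I expect the main obstacle to be the precise bookkeeping of the Poisson compensators: the artificial $+\lambda/2$ in the drift of \eqref{LSSE} exists precisely so that, after Itô, the $\lambda_\alpha\rmd t$ and $\lambda^k_\alpha\rmd t$ compensator drifts cancel all residual deterministic terms in both the master equation and \eqref{dol}; keeping the ranges $\alpha\le d_i$ vs.\ $\alpha>d_i$ and the $k$-indices straight in both computations is the only genuinely delicate point, the remainder being mechanical application of \eqref{Ito}.
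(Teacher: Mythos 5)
Your proof is correct and follows exactly the route the paper itself indicates: the theorem is not reproved there but quoted from \cite{BAR3,BarPZ98}, with the remark that all claims follow from standard stochastic calculus and the It\^o table \eqref{Ito}, which is precisely the computation you carry out (existence/uniqueness by Picard iteration with bounded coefficients, It\^o applied to $|\zeta\rangle\langle\zeta|$ and to $\norm{\zeta}^2$, compensation of the Poisson terms against the $\lambda/2$ in the drift, and the nonnegative-local-martingale-with-constant-mean argument for the martingale property). The only slip is typographical: the continuous drift of $p$ should read $\bigl(\lambda\, p(t_-)-\sum_{\alpha>d_1}\norm{V_\alpha\zeta(t_-)}^2-\sum_{\alpha>d_2}\sum_k\norm{S^k_\alpha\zeta(t_-)}^2\bigr)\rmd t$ (you dropped the factor $p(t_-)$ multiplying $\lambda$), and with it the cancellation against the jump compensators is exact, as you claim.
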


\begin{remark}[A first unravelling]
By the theorem above, $\Ebb_\Qbb[|\zeta(t)\rangle \langle \zeta(t)|]$ satisfies the
master equation \eqref{ME} with initial condition $\eta(0)$ (Assumption
\ref{ass:2}). So, we have $\eta(t)=\Ebb_\Qbb[|\zeta(t)\rangle \langle \zeta(t)|]$,
$\forall t\geq 0$, and, by the discussion below Eq.\ \eqref{eq:incond}, we get
\begin{equation}\label{1unr}
\eta_i(t)=\Ebb_\Qbb[|\zeta_i(t)\rangle \langle \zeta_i(t)|], \qquad i=1,\ldots,n, \quad t\geq 0,
\end{equation}
which shows that $\zeta(t)$ is a pure-state unravelling of the solution of the
Lindblad rate equation \eqref{eqbreuer2}.
\end{remark}

\begin{remark}[\!{\cite[Theor.\ 1.2]{BarPZ98}; \cite[Theor.\ 29.2]{Met82}}]\label{solDol}
The solution of the Dol\'eans SDE \eqref{dol} is
\begin{multline*}
p(t)=\norm{\zeta^0}^2 \exp \bigg\{ \sum_{\alpha=1}^{d_1} \bigg( \int_0^t
v_\alpha(s)\rmd W_\alpha(s)- \frac 1 2 \int_0^t v_\alpha(s)^2 \rmd s \bigg)
\\ {}+ \sum_{\alpha=1}^{d_2} \sum_{k=1}^n\bigg( \int_0^t
v_\alpha^k(s)\rmd W_\alpha^k(s)- \frac 1 2 \int_0^t v_\alpha^k(s)^2 \rmd s
\bigg)\bigg\}
\\ {}\times \prod_{\beta=d_1+1}^{m_1}
\biggl\{\exp\bigg[\int_0^t\left(\lambda_\beta- I_\beta(s)\right)\rmd s \bigg]
 \prod_{r\in (0,t]}\biggl[1+ \biggl(\frac{I_\beta(r)}{\lambda_\beta} -1 \biggr)
\Delta N_\beta(r)\biggr]\biggr\}
\\ {}\times \prod_{\beta=m_1+1}^{m}\prod_{\ell=1}^n
\biggl\{\exp\bigg[\int_0^t\left(\lambda_\beta^\ell- I_\beta^\ell(s)\right)\rmd s
\bigg]
 \prod_{r\in (0,t]}\biggl[1+ \biggl(\frac{I_\beta^\ell(r)}{\lambda_\beta^\ell} -1 \biggr)
\Delta N^\ell_\beta(r)\biggr]\biggr\},
\end{multline*}
where $\Delta N_\beta(r,\omega)=N_\beta(r,\omega)- N_\beta(r_-,\omega)$, $\Delta
N_\beta^\ell(r,\omega)=N_\beta^\ell(r,\omega)- N_\beta^\ell(r_-,\omega)$. By the
fact that a Poisson process has only a finite number of jumps in a compact interval,
for every $\omega$ only a finite number of factors contributes to the product over
$r$ in the representation above.

Note that, if for some $t,\,\omega,\, \beta,\, \ell$ one has
$I_\beta^\ell(t,\omega)=0$ and $\Delta N^\ell_\beta(t,\omega)=1$, then
$p(T,\omega)=0$, $\forall T>t$. Similarly, $I_\beta(t,\omega)=0$ and $\Delta
N_\beta(t,\omega)=1$ imply $p(T,\omega)=0$, $\forall T>t$.
\end{remark}

\subsection{The generalized stochastic Schr\"odinger
equation} The final aim is to derive an equation for the normalized process
\eqref{def:psi}. This is based upon It\^o stochastic calculus again and a
Girsanov-type change of measure.

\begin{remark}[The change of probability measure]\label{physprob}
For for every $T>0$, we define the \emph{physical probability} $\Pbb^T$ over
$(\Omega, \Fscr_T)$ by
\begin{equation}\label{physprobPT}
\mathbb{P}^T(A)=\mathbb{E}_\mathbb{Q}\left[{1}
_Ap(T)\right]\equiv \int_A\Vert\zeta(T,\omega)\Vert^2 \Qbb(\rmd \omega), \qquad \forall A\in\mathscr{F}_T\,.
\end{equation}
Note that $\Pbb^T$ depends also on $\zeta^0$, which we assume to be normalized in
the sense of Assumption \ref{ass:2}. The martingale property given in Theorem
\ref{MMart} ensures that the family of probabilities $\{\Pbb^T,\,T>0\}$ is
consistent, that is
\begin{equation}\label{cons}
0<t<T, \quad A\in\mathscr{F}_t \qquad \Rightarrow \qquad
\mathbb{P}^T(A)=\mathbb{P}^t(A).
\end{equation}
\end{remark}

To obtain from \eqref{cons} the existence of a unique probability in the infinite
horizon limit $T\to +\infty$ is a delicate problem and can be guaranteed only with
respect to some sub-filtration composed by Borel standard $\sigma$-algebras
\cite[Section A.5.5]{BarGreg09}.

It is important to note that the denominator $\norm{\zeta(t)}$ in the definition of
the processes $\psi_k(t)$ could indeed vanish as stated in Remark \ref{solDol}. But,
by the construction in Remark \ref{physprob}, this happens with probability zero
with respect to the new probability $\mathbb{P}^T$, while this is not guaranteed
under the reference probability $\Qbb$.

The important consequences of this change of measure are the modification of the
characteristics of the driving processes $N_\alpha^k(t)$ and $W_\alpha^j(t)$ (due to
some extension of the Girsanov theorem to the diffusive/jump case \cite{STOD}) and
the fact that $\psi(t)$ satisfies a non linear SDE, the \emph{stochastic
Schr\"odinger equation} \cite{Bel88,Bel89,Bel89c,Bel93,BarGreg09}.

\begin{theorem}[\!{\cite[Prop.\ 2.5, Theor.\ 2.7]{BAR3};
\cite[Prop.\ 1.1, Theor.\ 1.3]{BarPZ98}}] Under the probability $\mathbb{P}^T$, the
processes $\hat{W}_\alpha$, $\hat{W}_\beta^k$, $t\in[0,T]$, $\alpha=1,\ldots, d_1$,
$\beta=1,\ldots, d_2$, $k=1,\ldots, n$, defined by
\begin{equation}\label{hatW}
\hat{W}_\alpha(t)=W_\alpha(t)-\int_0^t v_\alpha(s)\,\rmd s, \qquad
\hat{W}_\beta^k(t)=W_\beta^k(t)-\int_0^t v_\beta^k(s)\,\rmd s,
\end{equation}
are independent standard Wiener processes and the processes $N_\alpha(t)$,
$N_\beta^k(t)$, $t\in[0,T]$, $\alpha=d_1+1,\ldots, m_1$, $\beta=d_2+1,\ldots, m_2$,
$k=1,\ldots, n$,  are counting processes of stochastic intensities $I_\alpha(t)$ and
$I_\beta^k(t)$, respectively.

Again under the probability $\mathbb{P}^T$, the components of the process $\psi(t)$
satisfy in the time interval $[0,T]$ the SDE
\begin{subequations}\label{EDSNL}
\begin{multline}\label{dpsi}
\rmd \psi_j(t)=V_j\big(\psi_1(t_{-}),\ldots,\psi_n(t_{-})\big)\rmd t
+\sum_{\alpha=1}^{d_1}\Big(L^{j} _\alpha-\frac 1 2
\,v_\alpha(t)\Big)\psi_j(t_-)\rmd\hat{W} _\alpha(t)
\\
{}+\sum_{\alpha=1}^{d_2}\sum_{k=1}^n\Big(R^{jk} _\alpha\psi_k(t_-)-\frac 1 2
\,v_\alpha^k(t)\psi_j(t_-)\Big)\rmd\hat{W} _\alpha^k(t)
\\ {}+
\sum_{\alpha=d_1+1}^{m_1}\left(\frac{L^{j}_\alpha}{\sqrt{I_\alpha(t)}}-1
\right)\psi_j(t_{-})\,\rmd N_\alpha(t) \\ {}+ \sum_{\alpha=d_2+1}^{m_2}\sum_{k=1}
^n\left(\frac{R^{jk}_\alpha\psi_k(t_{-})}{\sqrt{I_\alpha^k(t)}}-\psi_j(t_{-}
)\right)\rmd N_\alpha^k(t),
\end{multline}
where
\begin{multline}\label{Vj}
V_j(\psi_1(t_-),\ldots,\psi_n(t_-))=K^j\psi_j(t_-)+\frac{1}{2}
\sum_{\alpha=d_1+1}^{m_1}I_\alpha(t)\psi_j(t_-)\\
{}+\frac{1}{2} \sum_{\alpha=d_2+1}^{m_2}\sum_{k=1}^nI^k_\alpha(t)\psi_j(t_-)+\frac 1
2 \sum_{\alpha=1}^{d_1}v_\alpha(t)\left(L^{j}_\alpha
-\frac{1}{4}\,v_\alpha(t)\right)\psi_j(t_-)
\\ {}
+\frac 1 2 \sum_{\alpha=1}^{d_2}\sum_{k=1}^nv_\alpha^k(t)\left(R^{jk
}_\alpha\psi_k(t_-) -\frac{1}{4}\,v_\alpha^k(t)\psi_j(t_-)\right).
\end{multline}
\end{subequations}
\end{theorem}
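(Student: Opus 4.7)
The plan is to recognize the Doléans SDE \eqref{dol} for $p(t)=\|\zeta(t)\|^2$ as the Radon--Nikodym density driving the change of measure $\mathbb{Q}\to\mathbb{P}^T$ of Remark \ref{physprob}, and then apply a jump--diffusion Girsanov theorem (as in the cited Refs.\ \cite{BAR3,BarPZ98,STOD,Met82}) to identify the law of the driving noises under $\mathbb{P}^T$. Since $p(t)$ is a martingale satisfying \eqref{dol}, it is the stochastic exponential of the martingale
\[
M(t)=\sum_{\alpha=1}^{d_1}\!\int_0^t v_\alpha\rmd W_\alpha+\sum_{\alpha=1}^{d_2}\!\sum_{k=1}^n\!\int_0^t v_\alpha^k\rmd W_\alpha^k+\sum_{\alpha=d_1+1}^{m_1}\!\int_0^t\!\left(\tfrac{I_\alpha}{\lambda_\alpha}-1\right)\!(\rmd N_\alpha-\lambda_\alpha\rmd s)+\sum_{\alpha=d_2+1}^{m_2}\!\sum_{k=1}^n\!\int_0^t\!\left(\tfrac{I_\alpha^k}{\lambda_\alpha^k}-1\right)\!(\rmd N_\alpha^k-\lambda_\alpha^k\rmd s).
\]
Girsanov for the continuous part immediately yields that $\hat W_\alpha$ and $\hat W_\beta^k$ in \eqref{hatW} are $\mathbb{P}^T$-standard Wiener processes, independent of each other by the orthogonality expressed in the Itô table \eqref{Ito}. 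For the jump part, the multiplicative factor $I_\alpha(t)/\lambda_\alpha$ (resp.\ $I_\alpha^k(t)/\lambda_\alpha^k$) in front of the compensated Poisson differentials is exactly the likelihood ratio that rescales intensities; by the standard Girsanov formula for marked point processes, $N_\alpha$ becomes under $\mathbb{P}^T$ a counting process of stochastic intensity $\lambda_\alpha\cdot I_\alpha(t)/\lambda_\alpha=I_\alpha(t)$, and similarly $N_\alpha^k$ acquires intensity $I_\alpha^k(t)$. The pathologies of Remark \ref{solDol} occur only on a $\mathbb{Q}$-null set weighted by $p(T)=0$, hence a $\mathbb{P}^T$-null set, so the processes $\psi_j(t)$ are well defined $\mathbb{P}^T$-a.s.

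Next I derive the SDE \eqref{dpsi} by applying the jump--diffusion Itô formula to
\[
\psi_j(t)=\zeta_j(t)\,h\big(p(t)\big),\qquad h(x)=x^{-1/2}.
\]
I would decompose $d\psi_j=h(p_-)\rmd\zeta_j+\zeta_j(t_-)\rmd h(p)+\rmd[\zeta_j,h(p)]^c+\Delta\psi_j-h(p_-)\Delta\zeta_j-\zeta_j(t_-)\Delta h(p)$, and first treat the continuous and jump contributions separately. For the continuous martingale parts of \eqref{LSSEi} and \eqref{dol}, the expansion $h'=-\tfrac12 x^{-3/2}$, $h''=\tfrac34 x^{-5/2}$ together with the Itô table \eqref{Ito} produces (i) the genuine diffusive terms $(L_\alpha^j-\tfrac12 v_\alpha)\psi_j\rmd W_\alpha$ and $(R_\alpha^{jk}\psi_k-\tfrac12 v_\alpha^k\psi_j)\rmd W_\alpha^k$, plus (ii) correction drifts $\tfrac12 v_\alpha(L_\alpha^j-\tfrac14 v_\alpha)\psi_j$ and $\tfrac12 v_\alpha^k(R_\alpha^{jk}\psi_k-\tfrac14 v_\alpha^k\psi_j)$ arising from the quadratic variation of $h(p)$ and the cross-variation $[\zeta_j,h(p)]^c$. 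The drift $K^j+\lambda/2$ of $\zeta_j$, multiplied by $h(p_-)$, contributes $(K^j+\lambda/2)\psi_j\rmd t$.

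For the jumps, at a time $r$ at which $N_\alpha$ fires, $\zeta(r)$ is obtained from $\zeta(r_-)$ by replacing its $j$-th block with $\lambda_\alpha^{-1/2}L_\alpha^j\zeta_j(r_-)$, so $p(r)=(I_\alpha(r)/\lambda_\alpha)\,p(r_-)$, giving
\[
\psi_j(r)-\psi_j(r_-)=\frac{\lambda_\alpha^{-1/2}L_\alpha^j\zeta_j(r_-)}{\sqrt{I_\alpha(r)/\lambda_\alpha}\,\|\zeta(r_-)\|}-\psi_j(r_-)=\left(\frac{L_\alpha^j}{\sqrt{I_\alpha(r)}}-\openone\right)\psi_j(r_-),
\]
exactly the coefficient of $\rmd N_\alpha$ in \eqref{dpsi}; the analogous calculation for $N_\alpha^k$ gives the coefficient of $\rmd N_\alpha^k$. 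Rewriting the Poisson integrals as the sum of their compensated martingales (with the new intensities $I_\alpha,I_\alpha^k$ under $\mathbb{P}^T$) plus drifts $I_\alpha\rmd t$, $I_\alpha^k\rmd t$, one absorbs $-\lambda/2$ against the terms produced by compensation (since $\lambda=\sum\lambda_\alpha+\sum\sum\lambda_\alpha^k$), leaving precisely the terms $\tfrac12 I_\alpha\psi_j$ and $\tfrac12 I_\alpha^k\psi_j$ that appear in \eqref{Vj}. Finally, substituting $\rmd W_\alpha=\rmd\hat W_\alpha+v_\alpha\rmd t$ and $\rmd W_\alpha^k=\rmd\hat W_\alpha^k+v_\alpha^k\rmd t$ reshuffles the $v_\alpha$-linear terms into the drift and yields \eqref{dpsi}--\eqref{Vj}.

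The main obstacle is the bookkeeping in the last step: collecting the deterministic-time contributions coming from (a) the $\lambda/2$ drift in \eqref{LSSEi}, (b) the compensators of the Poisson terms under the new intensities, (c) the Itô corrections from $h''$ and $[\zeta_j,h(p)]^c$, and (d) the Girsanov shifts $v_\alpha,v_\alpha^k$ in the Wiener terms, and verifying that they conspire to give exactly \eqref{Vj}. Once that cancellation is checked, the theorem follows from uniqueness of the solution of the resulting SDE.
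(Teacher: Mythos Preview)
The paper does not supply a proof of this theorem; it quotes the result from the cited references \cite{BAR3,BarPZ98}, where it is obtained exactly by the route you describe: a jump--diffusion Girsanov theorem applied to the density $p(t)=\mathcal E(M)_t$, followed by the It\^o formula for $\psi_j=\zeta_j\,h(p)$ with $h(x)=x^{-1/2}$. Your identification of the diffusive coefficients, of the jumps of $\psi_j$, and of the new laws of $\hat W$ and $N$ under $\mathbb P^T$ is correct.

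There is one genuine bookkeeping error in your derivation of the drift. The terms $\tfrac12\sum_\alpha I_\alpha\,\psi_j+\tfrac12\sum_{\alpha,k} I_\alpha^k\,\psi_j$ in \eqref{Vj} do \emph{not} arise from compensating the Poisson integrals under $\mathbb P^T$: in \eqref{dpsi} the $\rmd N_\alpha$ and $\rmd N_\alpha^k$ appear uncompensated, and if you compensated them you would produce spurious drifts such as $(\sqrt{I_\alpha}\,L_\alpha^j-I_\alpha)\psi_j$, which are not in $V_j$. The correct source is the $\zeta_j(t_-)\,\rmd h(p)$ contribution. Since $p$ is a $\Qbb$-martingale, writing $\rmd p$ with \emph{un}compensated $\rmd N$ gives a continuous finite-variation part $p_-\bigl(\lambda-\sum_\alpha I_\alpha-\sum_{\alpha,k} I_\alpha^k\bigr)\rmd t$; multiplying by $h'(p_-)=-\tfrac12 p_-^{-3/2}$ and then by $\zeta_j(t_-)$ yields
\[
-\tfrac12\Bigl(\lambda-\textstyle\sum_\alpha I_\alpha-\sum_{\alpha,k} I_\alpha^k\Bigr)\psi_j\,\rmd t,
\]
which cancels the $+\tfrac\lambda2\psi_j$ coming from the drift of $\zeta_j$ and leaves exactly the $I$-terms of \eqref{Vj}. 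With this correction, the remaining pieces you list --- the $h''$ contribution $\tfrac38\sum v^2\psi_j$, the cross-variation $-\tfrac12\sum v_\alpha L_\alpha^j\psi_j-\tfrac12\sum v_\alpha^k R_\alpha^{jk}\psi_k$, and the Girsanov shift $\sum v_\alpha(L_\alpha^j-\tfrac12 v_\alpha)\psi_j+\sum v_\alpha^k(R_\alpha^{jk}\psi_k-\tfrac12 v_\alpha^k\psi_j)$ --- combine to the last two sums in \eqref{Vj}, and the proof closes.
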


Note that the SDE \eqref{EDSNL} is non-linear in $\psi(t)$, because the quantities
$I_\alpha(t)$, $I^k_\alpha(t)$, $v_\alpha(t)$, $v_\alpha^k(t)$ are bilinear in
$\psi(t)$ itself. Moreover, to consider \eqref{EDSNL} as a closed equation for
$\psi(t)$ poses interesting mathematical problems on the definition of
\emph{solution} and on the meaning of uniqueness because the law of the driving
noises $N_\alpha^k$ depends on the solution $\psi(t)$ itself through the stochastic
intensities $I_\alpha$, $I_\alpha^k$ \cite{p3}.

\begin{proposition}[A normalized unravelling] The solution of the Lindblad rate equation
\eqref{eqbreuer2} can be expressed as the following mean with respect to the
physical probability
\begin{equation}\label{2unr}
\eta_i(t)=\Ebb_{\Pbb^T}[|\psi_i(t)\rangle \langle \psi_i(t)|], \qquad i=1,\ldots,n, \quad T\geq t\geq 0.
\end{equation}
\end{proposition}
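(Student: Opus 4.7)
The plan is to combine the first (unnormalized) unravelling \eqref{1unr} with the change of measure of Remark \ref{physprob} and the martingale property from Theorem \ref{MMart}. Since the expectations of random operators are understood in the weak sense, it suffices to show that for every bounded operator $B$ on $\Hscr_S$ and every $t\in[0,T]$,
\[
\Ebb_{\Pbb^T}\!\left[\langle\psi_i(t)|B\,\psi_i(t)\rangle\right]=\Tr_{\Hscr_S}\{B\,\eta_i(t)\}.
\]

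First I would dispose of the singular set on which the normalization in \eqref{defpsik} degenerates. Let $N_t=\{\omega:\norm{\zeta(t,\omega)}=0\}\in\Fscr_t$. Since $p$ is a nonnegative $(\Fscr_t)$-martingale under $\Qbb$ (Theorem \ref{MMart}) and $N_t=\{p(t)=0\}$,
\[
\Pbb^T(N_t)=\Ebb_\Qbb[\openone_{N_t}p(T)]=\Ebb_\Qbb\!\left[\openone_{N_t}\Ebb_\Qbb[p(T)\,|\,\Fscr_t]\right]=\Ebb_\Qbb[\openone_{N_t}p(t)]=0,
\]
so $\Pbb^T$-a.s.\ we are in the first line of \eqref{defpsik}; hence $\Pbb^T$-a.s.
\[
|\psi_i(t)\rangle\langle\psi_i(t)|\,p(t)=|\zeta_i(t)\rangle\langle\zeta_i(t)|.
\]
Note also that on $N_t$ one has $\zeta_i(t)=0$ because $\sum_j\norm{\zeta_j(t)}^2=0$, so the identity above actually holds pointwise on $\Omega$.

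Next I would apply the definition \eqref{physprobPT} of $\Pbb^T$ and the martingale property of $p$: for $t\leq T$,
\begin{align*}
\Ebb_{\Pbb^T}\!\left[\langle\psi_i(t)|B\,\psi_i(t)\rangle\right]
&=\Ebb_\Qbb\!\left[\langle\psi_i(t)|B\,\psi_i(t)\rangle\,p(T)\right]\\
&=\Ebb_\Qbb\!\left[\langle\psi_i(t)|B\,\psi_i(t)\rangle\,\Ebb_\Qbb[p(T)\,|\,\Fscr_t]\right]\\
&=\Ebb_\Qbb\!\left[\langle\psi_i(t)|B\,\psi_i(t)\rangle\,p(t)\right]\\
&=\Ebb_\Qbb\!\left[\langle\zeta_i(t)|B\,\zeta_i(t)\rangle\right],
\end{align*}
using in the second line that $\langle\psi_i(t)|B\,\psi_i(t)\rangle$ is $\Fscr_t$-measurable and bounded (by $\norm{B}/n$ since $\norm{\psi_i(t)}\leq 1$), and in the last line the identity above. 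By the first unravelling \eqref{1unr} this last expression equals $\Tr_{\Hscr_S}\{B\,\eta_i(t)\}$, which proves \eqref{2unr}.

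The only delicate point is the behaviour on $N_t$; once it is handled by the martingale argument above (which simultaneously shows $\Pbb^T(N_t)=0$ and makes the identity between $|\psi_i|^2 p$ and $|\zeta_i|^2$ global on $\Omega$), the rest reduces to the standard trick of rewriting a $\Pbb^T$-expectation of an $\Fscr_t$-measurable random variable as a $\Qbb$-expectation against $p(t)$, which cancels exactly the normalization inside $\psi_i$.
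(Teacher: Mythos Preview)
Your proof is correct and follows essentially the same route as the paper's: both establish the pointwise identity $p(t)\,|\psi_i(t)\rangle\langle\psi_i(t)|=|\zeta_i(t)\rangle\langle\zeta_i(t)|$ on all of $\Omega$ (handling the null set $N_t=A_t$ where $\norm{\zeta(t)}=0$), take the $\Qbb$-expectation, and invoke the first unravelling \eqref{1unr}. The only cosmetic difference is that the paper passes from $\Pbb^t$ to $\Pbb^T$ via the consistency property \eqref{cons}, whereas you do the equivalent step by conditioning $p(T)$ on $\Fscr_t$ using the martingale property directly; also, your bound $\norm{B}/n$ should be $\norm{B}$ (the $1/n$ norm for $\psi_i$ holds only on $N_t$), but this is irrelevant since mere boundedness suffices.
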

\begin{proof}
Let us introduce the set $A_t=\{\omega\in \Omega: \norm{\zeta(t,\omega)}=0\}$. Then,
by the definitions of $p(t)$ and $\psi(t)$ given in Theorem \ref{MMart}, we have
\[
|\zeta_i(t)\rangle \langle \zeta_i(t)|={1}_{A_t^c}|\zeta_i(t)\rangle \langle \zeta_i(t)|
={1}_{A_t^c}p(t)|\psi_i(t)\rangle \langle \psi_i(t)|=p(t)|\psi_i(t)\rangle \langle \psi_i(t)|.
\]
By taking the $\Qbb$-expectation and by taking into account Eq.\ \eqref{1unr} and
the definition of the new probability, we get
\[
\eta_i(t)=\Ebb_\Qbb[|\zeta_i(t)\rangle \langle \zeta_i(t)|]=
\Ebb_\Qbb[p(t)|\psi_i(t)\rangle \langle \psi_i(t)|]=
\Ebb_{\Pbb^t}[|\psi_i(t)\rangle \langle \psi_i(t)|].
\]
Finally, by the consistency property \eqref{cons}, we get \eqref{2unr}.
\end{proof}

This proposition gives an unravelling of the Lindblad rate equation
\eqref{eqbreuer2} based on the components of the \emph{normalized} vector $\psi(t)$.
When $d_1=m_1=0$ and $d_2=0$, we recover the pure jump unravelling proposed in Ref.\
\cite{MoPe}. If the aim is only to simulate Eq.\ \eqref{eqbreuer2}, a normalized
pure state unravelling is much more efficient than a non-normalized one such as
\eqref{1unr} \cite{francesco2}. The simulation techniques based on \eqref{EDSNL}
with $d_1=d_2=0$ correspond to the Monte-Carlo wave function method started in Ref.\
\cite{MolCD}, while the case $d_1=m_1$ and $d_2=m_2$ gives rise to simulations of
diffusive type as in Refs.\ \cite{Gisin2,GisP92,GisKPTW93}. From the point of view
of simulations, the fact that the starting point was Eq.\ \eqref{ME}, and not Eq.\
\eqref{Lindbext}, has produced a more convenient unravelling with less noises (no
dependence on the label $j$).

\section{Measurements and stochastic master equations}\label{NMGSME}

In this section we face the problem of the measurement interpretation of the
unravelling we have constructed. We introduce the notions of \textit{instruments}
and \textit{a posteriori} states and we derive the non Markovian generalization of
the \emph{stochastic master equations}.

\subsection{Outputs and noises}
In the theory of measurements in continuous time
\cite{BarGreg09,BAR3,BarP96,BarPZ98} it is assumed that the output of the
measurement is given by some components of the driving noises appearing in the SDE
(Eq.\ \eqref{LSSE} or \eqref{LSSEi} in our case); the law of the output in $[0,T]$
is the physical probability \eqref{physprobPT}. Not all the components of $W$ and
$N$ have to contribute to the output. The role of some of the components of the
noises could be only to perform the unravelling of some dissipative term.

Let us examine first the components ${W}_\alpha^k(t)$, $\alpha= 1, \ldots, d_2$,
$k=1,\ldots,n$. For $t\in[0,T]$, under the physical probability $\Pbb^T$, from
\eqref{hatW} we get ${W}_\alpha^k(t)=\hat W_\alpha^k(t)+\int_0^t v_\alpha^k(s)\,\rmd
s$; but, as one sees from Eq.\ \eqref{vak}, $v_\alpha^k(s)$ mixes different
components of $\psi(t)$ and cannot be an observable, because it does not respect the
superselection rule. In particular the mean value of ${W}_\alpha^k(t)$ turns out to
be $\Ebb_{\Pbb^T}\left[{W}_\alpha^k(t)\right]=\int_0^t
\Ebb_{\Pbb^T}\left[v^k_\alpha(s)\right]\rmd s$ with
\begin{equation*}%\label{meanv}
\Ebb_{\Pbb^T}\left[v^k_\alpha(t)\right]=\Ebb_{\Pbb^t}\left[v^k_\alpha(t)\right]=
2\sum_{j=1}^n\RE \Tr_\Hscr\left\{\left(R^{jk}_\alpha\otimes
|e_j\rangle\langle e_k|\right)\eta(t)\right\}
\end{equation*}
and it involves the unphysical non-diagonal blocks
$\Tr_{\Cbb^n}\left\{\left(\openone\otimes |e_j\rangle\langle
e_k|\right)\eta(t)\right\}$. So, ${W}_\alpha^k(t)$ cannot contribute to the output.

No problem of this kind arises for the other processes, as one sees from Eqs.
\eqref{quantity}. The stochastic intensities $I_\alpha(t)$, $I^k_\alpha(t)$ and the
processes $v_\alpha(t)$ do not mix different components of $\psi(t)$. However, if
the counting process $N^k_\alpha$ is detected we gain information on the block
contributing to the emission (the block $k$), as one sees for instance from the mean
intensity
\begin{equation*}%\label{meanIka}
\Ebb_{\Pbb^t}\left[I^k_\alpha(t)\right]=\sum_{j=1}^n
\Tr_{\Hscr_S}\left\{{R^{jk}_\alpha}^* R^{jk}_\alpha\eta_k(t)\right\}.
\end{equation*}
If we assume that the index $k$ is not physically observable coherently with the
fact that the system state is the sum \eqref{sysstat}, the process $N^k_\alpha$ is
not observable by itself. However, there is no obstruction in considering as
physically observable the counting process
\begin{equation}
M_\alpha(t):= \sum_{k=1}^n N^k_\alpha(t), \qquad \alpha=d_2+1,\ldots,m_2,
\end{equation}
whose stochastic intensity, under the physical probability, is $\sum_{k=1}^n
I^k_\alpha(t)$. No problem arises on the observability of the other counting
processes $N_\beta$ ($\beta=d_1+1,\ldots,m_1$), whose stochastic intensity under the
physical probability is $I_\beta(t)$.

Let us stress that, under the reference probability $\Qbb$, $M_\alpha$ is a Poisson
process of intensity
\begin{equation}\label{Lambda}
\Lambda_\alpha=\sum_{k=1}^n\lambda^k_\alpha\,, \qquad
\alpha=d_2+1,\ldots,m_2\,.
\end{equation}

Let us consider finally the processes $W_\alpha$. At least in quantum optical
systems, observations with a ``diffusive'' character come out from heterodyne or
homodyne detection and the involved operators must have an explicit time dependence
due to the presence of the \emph{local oscillator} \cite[Chapt.\ 7]{BarGreg09}. We
assume a very smooth time dependence which does not cause any essential change in
the previous results.

\begin{assumption}
For $\alpha=1,\ldots,d_1$, we assume the operators $L^{j}_\alpha$ to be time
dependent and given by
\[
L^{j}_\alpha(t)=\overline{ h^{j}_\alpha(t)}\, \hat L^{j}_\alpha\,, \qquad
\hat L^{j}_\alpha\in \Lscr(\Hscr_S),\quad \abs{h^{j}_\alpha(t)}=1;
\]
the complex functions $h^{j}_\alpha(t)$ are continuous from the left.
\end{assumption}

No time dependence is introduced into the master equations of Section \ref{nMLME}.
The explicit time dependence involves only the terms with $\rmd W_\alpha$ in Eqs.\
\eqref{LSSE}, \eqref{LSSEi}, \eqref{dpsi} and the third term in the right hand side
of \eqref{Vj}; moreover, from Eq.\ \eqref{v_a}, we get
\[
v_\alpha(t)=2\RE \sum_{j=1}^n\overline{h^j_\alpha(t)}\big\langle\psi_j(t_-)\big|\hat L^{j}
_\alpha \psi_j(t_-)\big\rangle.
\]

The key result of the previous discussion is that, due to the mathematical structure
and the meaning of the discrete label in the states, only the processes $W_\alpha$
($\alpha=1,\ldots, d_1$), $N_\beta$ ($\beta=d_1+1,\ldots,m_1$), $M_\gamma$
($\gamma=d_2+1,\ldots,m_2$) can be considered as possible components of the output.
However, some of the components could represent pure noises, not observed
quantities. So, we assume that only the first components are observed.

\begin{remark}\label{rem:obs_out}
Let us take $d_1'\leq d_1$, $m_1'\leq m_1$, $m_2'\leq m_2$. We assume that the
observed outputs are $W_\alpha$ with $1\leq \alpha \leq d_1'$, $N_\beta$ with
$d_1+1\leq \beta \leq m_1'$, $M_\gamma$ with $d_2+1\leq \gamma \leq m_2'$. If some
set of indices is empty, no component of the corresponding process is observed. The
law of the output is the physical probability \eqref{physprobPT}. Finally we denote
by $\{\Gscr_t,\,t\geq 0\}$ the augmented natural filtration generated by the set of
the observed processes.
\end{remark}

\subsection{The linear stochastic master equations and the instruments}
It is possible to have only the observed processes as driving noises in the
dynamical equations, but for this we need to work with density matrices and
trace-class operators. Let us introduce the positive trace-class operators
\begin{equation}\label{sigma}
\sigma(t):= \Ebb_\Qbb\big[|\zeta(t)\rangle\langle \zeta(t)|\big| \Gscr_t \big]
, \qquad \sigma_i(t)= \Ebb_\Qbb\big[|\zeta_i(t)\rangle\langle
\zeta_i(t)|\big| \Gscr_t \big].
\end{equation}
This means to take the mean on the non-observed components of the noises. Let us
recall that $\zeta(0)$ is connected to the initial condition $\eta(0)$ (given in
Eq.\ \eqref{eq:incond}) by Assumption \ref{ass:2}. By the fact that $\Gscr_0$ is
trivial we get
\begin{equation}\label{sigma0}
\sigma(0)=\eta(0)\in\Sscr(\Hscr), \qquad\sigma_i(0)=\eta_i(0).
\end{equation}

\begin{proposition}
In the stochastic basis $(\Omega,\mathscr{F},\mathscr{G}_t,\mathbb{Q})$, the
operator valued process $\sigma(t)$ satisfies the linear stochastic master equation
\begin{multline}\label{lSME}
\rmd \sigma(t)=\Lcal[\sigma(t_-)]\rmd t +
\sum_{\alpha=1}^{d_1'}\left(V_\alpha(t)\sigma(t_{-})+\sigma(t_-)V_\alpha(t)^{*}\right)
\rmd W_\alpha(t)
\\ {}
+\sum_{\alpha =d_1+1}^{m_1'}\left(\frac{V_\alpha \sigma(t_{-})
V_\alpha^*}{\lambda_\alpha} -\sigma(t_{-})\right)(\rmd N_\alpha(t)-\lambda_\alpha
\rmd t)
\\ {}
+\sum_{\alpha =d_2+1}^{m_2'}\left(\sum_{k=1}^n\frac{S^{k}_\alpha \sigma(t_{-})
{S^{k}_\alpha}^*}{\Lambda_\alpha} -\sigma(t_{-})\right)\left(\rmd
M_\alpha(t)-\Lambda_\alpha \rmd t\right),
\end{multline}
where $\mathcal{K}_i\big(\rho_1,\ldots,\rho_n\big)$ is defined by Eq.\ \eqref{K_i},
$\Lscr$ by \eqref{L}, $\Lambda_\alpha$ by \eqref{Lambda},
\begin{equation*}
V_\alpha(t)= \sum_{i=1}^n \overline{h^{i}_\alpha(t)}\, \hat L^{i}_\alpha \otimes |e_i\rangle \langle e_i|,
\qquad \alpha=1,\ldots,d_1.
\end{equation*}
Given the initial condition, Eq.\ \eqref{lSME} has pathwise unique solution. For the
components (the blocks on the diagonal) Eq.\ \eqref{lSME} reduces to
\begin{multline}\label{xx}
\rmd\sigma_j(t)=\mathcal{K}_j\big(\sigma_1(t_{-}),\ldots,\sigma_n(t_{-} )\big)\rmd t
\\ {}+\sum_{\alpha=1}^{d_1'}\left(\overline{h^j_\alpha(t)}\, \hat
L^{j}_\alpha\sigma_j(t_{-})+h^j_\alpha(t)\sigma_j(t_-){\hat L^{j*}_\alpha}\right)
\rmd W_\alpha(t)
\\ {}
+\sum_{\alpha =d_1+1}^{m_1'}\left(\frac{L^{j}_\alpha \sigma_j(t_{-})
{L^{j*}_\alpha}}{\lambda_\alpha} -\sigma_j(t_{-})\right)(\rmd
N_\alpha(t)-\lambda_\alpha \rmd t)
\\ {}
+\sum_{\alpha =d_2+1}^{m_2'}\left(\sum_{k=1}^n\frac{R^{jk}_\alpha \sigma_k(t_{-})
{R^{jk}_\alpha}^*}{\Lambda_\alpha} -\sigma_j(t_{-})\right)\left(\rmd
M_\alpha(t)-\Lambda_\alpha \rmd t\right).
\end{multline}
\end{proposition}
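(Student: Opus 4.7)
The plan is to derive the SDE for $\sigma(t)$ by first computing the SDE for $|\zeta(t)\rangle\langle\zeta(t)|$ under $\mathbb{Q}$ and then projecting it onto the observed filtration $\Gscr_t$ via $\Ebb_\Qbb[\,\cdot\,|\Gscr_t]$. Applying It\^o's formula to $|\zeta\rangle\langle\zeta|$ starting from \eqref{LSSE} and using the It\^o table \eqref{Ito}, one collects all drift contributions (the $K+\lambda/2$ term, the quadratic covariations of the Wiener parts, and the compensators of the Poisson jumps) and obtains
\[
\rmd |\zeta(t)\rangle\langle\zeta(t)| = \Lcal\big[|\zeta(t_-)\rangle\langle\zeta(t_-)|\big]\,\rmd t + \rmd M(t),
\]
where the drift is forced to coincide with $\Lcal[\cdot]$ by Theorem~\ref{MMart} (and can be verified directly), and $M(t)$ collects all $\Qbb$-martingale differentials with the expected coefficients: $V_\alpha|\zeta\rangle\langle\zeta| + \text{h.c.}$ on $\rmd W_\alpha$, $S^k_\alpha|\zeta\rangle\langle\zeta| + \text{h.c.}$ on $\rmd W_\alpha^k$, and the analogous compensated-jump expressions for $N_\alpha$ and $N_\alpha^k$.

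Taking $\Ebb_\Qbb[\,\cdot\,|\Gscr_t]$ term by term, boundedness and linearity of $\Lcal$ allow pulling the conditional expectation inside the drift, giving $\Lcal[\sigma(t_-)]\,\rmd t$. The noise contributions split into three kinds. Those driven by unobserved noises ($W_\alpha$ with $\alpha>d_1'$, all $W_\alpha^k$, $N_\beta$ with $\beta>m_1'$, and $N_\gamma^k$ with $\gamma>m_2'$) are $\Qbb$-martingales which, by the mutual independence of the driving processes under the reference measure, are orthogonal to $\Gscr_t$ and so vanish under the conditioning. The observed $W_\alpha$ ($\alpha\le d_1'$) and $N_\beta$ ($d_1<\beta\le m_1'$) increments pass through the projection by standard predictability arguments, producing precisely the $V_\alpha(t)\sigma + \sigma V_\alpha(t)^*$ and $\lambda_\alpha^{-1}V_\alpha\sigma V_\alpha^* - \sigma$ coefficients of \eqref{lSME}.

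The genuinely non-trivial case is the $N_\gamma^k$ for $d_2<\gamma\le m_2'$, whose individual paths are not $\Gscr_t$-adapted but whose sum $M_\gamma$ is. Since under $\Qbb$ the $n$ processes $N_\gamma^k$ are independent Poisson with rates $\lambda_\gamma^k$, given a jump of $M_\gamma$ at time $t$ the conditional probability that it is $N_\gamma^k$ which jumps equals $\lambda_\gamma^k/\Lambda_\gamma$. Averaging the jump coefficient $\lambda_\gamma^{k\,-1}S_\gamma^k|\zeta\rangle\langle\zeta|S_\gamma^{k*} - |\zeta\rangle\langle\zeta|$ against this multinomial law and then projecting $|\zeta(t_-)\rangle\langle\zeta(t_-)|$ to $\sigma(t_-)$ yields the integrand $\Lambda_\gamma^{-1}\sum_k S_\gamma^k\sigma(t_-)S_\gamma^{k*} - \sigma(t_-)$; the innovation $\rmd M_\gamma - \Lambda_\gamma\,\rmd t$ is the correct one because $M_\gamma$ is Poisson of rate $\Lambda_\gamma$ under $\Qbb$. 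This completes the derivation of \eqref{lSME}. The block form \eqref{xx} then follows by inserting the explicit block-diagonal structure of $V_\alpha(t)$ from \eqref{SS} together with $S_\alpha^k=\sum_j R_\alpha^{jk}\otimes|e_j\rangle\langle e_k|$, which sends the $k$-th diagonal block of $\sigma$ into the $j$-th. The initial condition $\sigma_i(0)=\eta_i(0)$ is immediate from \eqref{sigma0} together with Assumption~\ref{ass:2} and the triviality of $\Gscr_0$.

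Pathwise uniqueness follows from the linearity of \eqref{lSME}, the boundedness of all operator coefficients, and the finite activity of the Poisson drivers on compact intervals: an operator-norm Gronwall estimate applied to the difference of two solutions closes the argument. The main obstacle is the $M_\gamma$ projection, namely identifying the correct $(\Gscr_t)$-integrand as the multinomially averaged operator and the correct innovation as $\rmd M_\gamma - \Lambda_\gamma\,\rmd t$; this is where the reference-probability framework is essential, since it reduces the filtering computation to the elementary conditional-probability calculation sketched above, while preserving the observed processes as their own innovations under $\Qbb$.
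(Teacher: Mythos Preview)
Your proposal is correct and follows essentially the same route as the paper: apply It\^o's formula to $|\zeta(t)\rangle\langle\zeta(t)|$, take the $\Qbb$-conditional expectation with respect to $\Gscr_t$, and read off the block equations by direct computation; uniqueness is then a consequence of linearity and bounded coefficients. The paper's own proof is terser, delegating the filtering step and the existence/uniqueness argument to \cite[Sect.~4.2, Prop.~3.4]{BAR3}, whereas you have spelled out the mechanism---in particular the multinomial averaging that produces the $M_\gamma$-integrand---which is precisely the content hidden behind that reference.
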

\begin{proof}
By applying the It\^o formula to $|\zeta(t)\rangle\langle \zeta(t)|$ and, then, by
taking the conditional expectation, we get the linear stochastic master equation for
$\sigma(t)$ as explained in \cite[Sect.\ 4.2]{BAR3}. Existence and uniqueness of the
solution of Eq.\ \eqref{lSME} is given in \cite[Prop. 3.4]{BAR3}. Equation
\eqref{xx} is obtained by direct computations.
\end{proof}

\begin{remark} Let us consider now the physical probability introduced in Remark
\ref{physprob}. The probability density of the restriction of $\Pbb^t$ to $\Gscr_t$
with respect to the reference measure $\Qbb$ on $\Gscr_t$ is
\begin{equation}
p_\Gscr(t)= \Ebb_\Qbb[p(t)|\Gscr_t]=\Tr_\Hscr\{\sigma(t)\}\equiv \sum_{j=1}^n
\Tr_{\Hscr_S}\{\sigma_j(t)\}.
\end{equation}
The density $p_\Gscr$ is a $\Gscr$-martingale under $\Qbb$ and the restrictions of
the physical probabilities are consistent.
\end{remark}

In the axiomatic formulation of a quantum theory, measurements are represented by
instruments, which give the probabilities and the states  after the measurement
(\emph{a posteriori} states). As in Refs.\ \cite{BarPZ98,BAR3}, we put
\begin{equation}
\mathcal{I}_t(F)[\eta(0)]=\mathbb{E}_\mathbb{Q}[{1}_F\sigma(t)], \qquad
F\in\Gscr_t\,, \quad \eta(0)\in \Sscr(\Hscr).
\end{equation}
By linearity we extend $\mathcal{I}_t(F)$ to the whole $\Tscr(\Hscr)$ and we get an
\emph{instrument} with value space $(\Omega,\Gscr_t)$, which means that $\Ical_t(F)$
is a CP map from $\Tscr(\Hscr)$ into itself for all $F\in \Gscr_t$, it is a strongly
$\sigma$-additive measure as a function of $F$ and $\Ical_t(\Omega)$ is
trace-preserving.

\begin{remark} Let us particularize the definition of instrument in the enlarged space to our case.
We define
\begin{equation}
\mathcal{I}_t^i(F)[\eta_1(0),\ldots,\eta_n(0)]=\mathbb{E}_\mathbb{Q}[{1}_F\sigma_i(t)], \qquad
F\in\Gscr_t,
\end{equation}
for all $\eta(0)$ satisfying the superselection rules. With the notations of Remarks
\ref{cqalgebras} and \ref{reduceddyn} we have $\big( \Ical_t^1(F),\ldots,
\Ical_t^n(F)\big)=\Pcal\circ \Ical_t(F)\big|_{\Cscr\big(\Xscr;\Tscr(\Hscr_S)\big)}$.
This is an instrument with the same value space as before, but made up of maps on
$\Cscr\big(\Xscr;\Tscr(\Hscr_S)\big)$. Finally, by defining
\begin{equation}
\Ical_t^S(F)=\sum_{j=1}^n \Ical^j_t(F),
\end{equation}
we get an instrument with value space $(\Omega,\Gscr_t)$ made up of CP maps from
$\Cscr\big(\Xscr;\Tscr(\Hscr_S)\big)$ into $\Tscr(\Hscr_S)$. Moreover, the
connection with the various dynamical maps introduced in Remark \ref{reduceddyn} is
given by $\Ical_t(\Omega)=\Tcal(t)$,
\[
\big(\Ical^1_t(\Omega),\ldots,\Ical^n_t(\Omega)\big)=
\Pcal\circ \Tcal(t)\big|_{\Cscr\big(\Xscr;\Tscr(\Hscr_S)\big)}, \quad \Ical^S_t(\Omega)
=\Pcal_S\circ\Pcal\circ \Tcal(t)\big|_{\Cscr\big(\Xscr;\Tscr(\Hscr_S)\big)}.
\]
\end{remark}

The instruments give the physical probabilities once one has the pre-mea\-surement
state. In our case we have, $\forall F\in \Gscr_t$,
\begin{equation}\begin{split}
\Pbb^t(F)&=\Tr_\Hscr\{\Ical_t(F)[\eta(0)]\}=
\sum_{i=1}^n\Tr_{\Hscr_S}\{\mathcal{I}_t^i(F)[\eta_1(0),\ldots]\}\\ {}&=
\Tr_{\Hscr_S}\{\mathcal{I}_t^S(F)[\eta_1(0),\ldots]\}.\end{split}
\end{equation}
This equation says that the probabilities $\Pbb^t$, introduced before by starting
from some stochastic differential equation, can be obtained also from instruments;
so, the axiomatic structure of a quantum theory is respected and the interpretation
as physical probabilities is justified.

\subsection{The a posteriori states and the stochastic master equation}
The instruments give also the \emph{a posteriori states}, the conditional states
after the measurement. Let us recall the definition in the case of $\Ical_t$; in the
other cases the definition is analogue. The \emph{a posteriori} state for the
instrument $\Ical_t$ and the pre-measurement state $\eta(0)$ is the
$\Sscr(\Hscr)$-valued random variable $\rho(t)$ such that
\[
\Ical_t(F)[\eta(0)]=\Ebb_{\Pbb^t}[{1}_F\rho(t)], \qquad \forall F\in \Gscr_t.
\]
By taking into account that the density of $\Pbb^t$ with respect to $\Qbb$ is the
trace of $\sigma(t)$ and how $\Ical_t(F)[\eta(0)]$ is defined in terms of
$\sigma(t)$, we get easily
\[
\rho(t)=
\frac {\sigma(t)}{\Tr_\Hscr\{\sigma(t)\}}=
\Ebb_{\Pbb^t}\big[\vert\psi(t)\rangle\langle\psi(t)\vert\big|\Gscr_t\big].
\]
The components of $\rho(t)$, which are
\[
\rho_{i}(t)=\Ebb_{\Pbb^t}\big[\vert\psi_i(t)\rangle\langle\psi_i(t)\vert\big|\Gscr_t\big]=
\frac {\sigma_i(t)}{\Tr_\Hscr\{\sigma(t)\}},
\quad i=1,\ldots,n,
\]
give the \emph{a posteriori} states for $\Ical^i_t$:
\begin{equation}\label{rho_i}
\mathcal{I}_t^i(F)[\eta_1(0),\ldots,\eta_n(0)]=\mathbb{E}_{\Pbb^t}[{1}_F\rho_i(t)].
\end{equation}
Note that we have also $\Ebb_{\Pbb^t}[\rho_i(t)]=\eta_i(t)$. Finally, by taking the
sum over $i$ in Eq. \eqref{rho_i}, we get the \emph{a posteriori} states
$\rho_S(t)=\sum_i \rho_i(t)$ for $\Ical^S_t$.

On the other side, the states $\eta(t)$, $\eta_i(t)$ are called the \emph{a priori
states}, due to the fact that these states are the averages of the \emph{a
posteriori} states and that they are interpreted as the states to be assigned to the
system at time $t$ when the result of the observation is not taken into account.

\begin{remark}[The stochastic master equation]
For $\alpha=1,\ldots,d_1'$, $\beta=d_1+1,\ldots,m_1'$, $\gamma=d_2+1,\ldots,m_2'$,
let us define
\begin{equation}
m_\alpha(t)=:\Ebb_{\Pbb^t}\left[v_\alpha(t)\big|\Gscr_t\right]=2\RE\sum_{j=1}^n \overline{h^j_\alpha(t)}
\Tr_{\Hscr_S}\left\{\hat L^{j}_\alpha\rho_j(t_-)\right\},
\end{equation}
\begin{equation}%\label{meanI}
J_\beta^1(t):=\Ebb_{\Pbb^t}\left[I_\beta(t)\big|\Gscr_t\right]=\sum_{j=1}^n
\Tr_{\Hscr_S}\left\{{L^{j}_\beta}^* L^{j}_\beta\rho_j(t_-)\right\},
\end{equation}
\begin{equation}
J_\gamma^2(t):= \sum_{k=1}^n \Ebb_{\Pbb^t}\big[I^k_\gamma(t)\big|\Gscr_t\big]=\sum_{j,k=1}^n
\Tr_{\Hscr_S}\left\{{R^{jk}_\gamma}^* R^{jk}_\gamma\rho_k(t_-)\right\}.
\end{equation}
Then, by stochastic calculus, under the new probability $\Pbb^T$ and for
$t\in[0,T]$, we get the equation for $\rho(t)$ \cite[Rem. 3.6]{BAR3} and, then, the
\emph{stochastic master equation} for the components
\begin{multline}\label{xxyy}
\rmd\rho_j(t)=\mathcal{K}_j\big(\rho_1(t_{-}),\ldots,\rho_n(t_{-} )\big)\rmd t
\\ {}
+\sum_{\alpha=1}^{d_1'}\left(\overline{h^j_\alpha(t)}\, \hat
L^{j}_\alpha\rho_j(t_{-})+h^j_\alpha(t)\rho_j(t_-){\hat L^{j*}_\alpha}-
m_\alpha(t)\rho_j(t_-)\right) \rmd \hat W_\alpha(t)
\\ {}
+\sum_{\beta =d_1+1}^{m_1'}\left(\frac{L^{j}_\beta \rho_j(t_{-})
{L^{j*}_\beta}}{J_\beta^1(t)} -\rho_j(t_{-})\right)(\rmd N_\beta(t)-J_\beta^1(t)
\rmd t)
\\ {}
+\sum_{\gamma =d_2+1}^{m_2'}\left(\sum_{k=1}^n\frac{R^{jk}_\gamma \rho_k(t_{-})
{R^{jk}_\gamma}^*}{J_\gamma^2(t)} -\rho_j(t_{-})\right)\left(\rmd
M_\gamma(t)-J_\gamma^2(t) \rmd t\right).
\end{multline}
The processes $\hat W_\alpha$ are independent standard Wiener processes,
$N_\beta(t)$ is a counting process of stochastic intensity $J_\beta^1(t)$ and
$M_\gamma(t)$ is a counting process of stochastic intensity $J_\gamma^2(t)$.
\end{remark}

\section{A two-level system in a structured bath}\label{appl}

To give a simple, but concrete example of the theory we have developed and to have a
first idea of the effects on physically measurable quantities, here we study a model
of a two level atom in contact with a non-trivial structured reservoir and we
compute the heterodyne spectrum of its emitted light. This is a modification of a
model \cite{bud,Br,MoPe} which could represent the dynamics of a single qubit in a
non Markovian environment or the dynamics of an optically active molecule, as the
fluorophore system, in a local nano-environment \cite{Bud09}.

We consider a two-level system in contact with a two-band reservoir; so,
$\Hscr_S=\Cbb^2$ and $n=2$. Let $\sigma_z$, $\sigma_\pm$ be the usual Pauli
matrices; then, $P_+=\sigma_+\sigma_-$ is the projection on the excited state
$\binom{1}{0}$ and $P_-=\sigma_-\sigma_+$ the projection on the ground state
$\binom{0}{1}$. Here we give the mathematical model, while the physical
interpretation is given when we write down the various dynamical equations. By using
the notations introduced in Assumption \ref{ass:3} and Section \ref{stononmark}, the
model we consider is defined by the following choices:
\begin{equation}\label{eq:model}\begin{split}
& d_1=m_1=2, \quad d_2=0, \quad m_2=2; \qquad H^i=\frac{\omega_i}2\, \sigma_z\,, \quad
\omega_i>0, \quad i=1,2;
\\
& R_1^{ii}=0,  \quad R^{21}_1=\sqrt{\gamma_1}\, \sigma_-\,, \quad
R^{12}_1=\sqrt{\gamma_2}\, \sigma_+\,, \qquad \gamma_i>0, \quad i=1,2,
\\
& R_2^{ii}=0,   \quad   R_2^{12}=0, \quad R^{21}_2=\sqrt{\gamma_0\varkappa}\,
\openone, \qquad \varkappa> 0, \quad \gamma_0>0; \qquad 0<\epsilon\leq 1,
\\
& L^{1}_1(t)=L^{2}_1(t)=\rme^{\rmi \nu t}\sqrt{\gamma_0\epsilon}\, \sigma_-\,, \quad
L^{1}_2(t)=L^{2}_2(t)=\rme^{\rmi \nu t}\sqrt{\gamma_0(1-\epsilon)}\,
\sigma_-\,,\quad \nu\in \Rbb.
\end{split}\end{equation}
The driving processes in the linear SDEs are the standard Wiener processes $W_1$,
$W_2$ and the Poisson processes $N^1_1$, $N^2_1$, $N^1_2$, with intensities
$\lambda^1_1$, $\lambda^2_1$, $\lambda^1_2$; all these processes are independent.
According to Remark \ref{rem:trick}, take $\lambda^2_2\downarrow 0$, so that $N^2_2$
is almost surely $0$ and we can set $\rmd N^2_2(t)=0$.

\subsection{The Lindblad rate equation and the equilibrium state}

First of all let us write down in the concrete case introduced above the Lindblad
rate equation \eqref{eqbreuer2}
\begin{subequations}\label{model}
\begin{multline}
\frac{\rmd\ }{\rmd t}\,\eta_1(t)=\Kcal_1\big(\eta_1(t),\eta_2(t)\big)
\equiv\gamma_0\left(\sigma_-\eta_1(t)\sigma_+ -\frac{1}{2}\{ P_+,\eta_1(t)\}\right)
\\ {}+ \gamma_2\sigma_+\eta_2(t)\sigma_--\frac{\gamma_1}{2}\{ P_+,\eta_1(t)\}-
\gamma_0\varkappa\eta_1(t)- \frac{\rmi \omega_1}2\left[\sigma_z, \eta_1(t) \right] ,
\end{multline}
\begin{multline}
\frac{\rmd\ }{\rmd t}\,\eta_2(t)=\Kcal_2\big(\eta_1(t),\eta_2(t)\big) \equiv
\gamma_0\left(\sigma_-\eta_2(t)\sigma_+ -\frac{1}{2}\{ P_+,\eta_2(t)\}\right) \\ {}+
\gamma_1\sigma_-\eta_1(t)\sigma_+-\frac{\gamma_2}{2}\{
P_-,\eta_2(t)\}+\gamma_0\varkappa\eta_1(t)- \frac{\rmi \omega_2}2\left[\sigma_z,
\eta_2(t) \right] .
\end{multline}
\end{subequations}

The model of Refs.\ \cite{Br,BrGe} corresponds to $\gamma_0=0$, $\varkappa=0$,
$\omega_1=\omega_2$; moreover, the rotating framework is used, so that the terms
with $\omega_i$ disappear. In Refs.\ \cite{bud,Bud09} the case
$\omega_1\neq\omega_2$ is allowed and it is explained by different energy shifts
induced by the two bands of the environment. So, we have a two level molecule with
two resonance frequencies due to the structured environment. The terms with
$\gamma_1$ and $\gamma_2$ represent the molecular transitions induced by the
environment and concomitant with transitions between the two bands of the nano
environment.

Reference \cite{Bud09} studies the stimulated fluorescence light under laser
excitation of the molecule; the treatment is based on the quantum regression
formula. Instead, our aim is to study the spontaneously emitted light and to this
end we have added the first term in both equations, the one with $\gamma_0$, which
is an explicit spontaneous emission term.

To have emission without stimulation by external light, we need some thermal-like
excitation. To get this effect we have added the terms with $\gamma_0\varkappa$.
This is the simplest modification giving rise to a non trivial equilibrium state.

If we write Eqs.\ \eqref{model} in terms of the matrix elements of $\eta_1$ and
$\eta_2$ we get two decoupled systems of equations: the system for the coherences
(the off diagonal terms) and the system for the populations (the diagonal terms).
Firstly, one checks easily that the coherences decay exponentially to zero. On the
other side, the system of equations for the diagonal terms turns out to be
equivalent to a 4-state, irreducible classical Markov chain. If we denote by $1^+$,
$1^-$, $2^+$, $2^-$ the four states, the transition rates different from zero are
$\gamma_0$ for the transition $1^+\to 1^-$, $\gamma_1$ for the transition $1^+\to
2^-$, $\gamma_0\varkappa$ for $1^+\to 2^+$, $\gamma_2$ for $2^-\to 1^+$, $\gamma_0$
for $2^+\to 2^-$, $\gamma_0\varkappa$ for $1^-\to 2^-$. From the graph of this
finite-state Markov chain we see that it is irreducible; then, there is a unique
equilibrium distribution, computed below, and it is a global attractor.

\subsubsection*{Equilibrium state.}
The Lindblad rate equations \eqref{model} admit a unique equilibrium state
$\eta_i(\infty)=\lim_{t\to +\infty} \eta_i(t)$, $i=1,2$, which can be easily
computed. It turns out to be given by
\[
\eta_i(\infty)=p_i \left(z_i^+P_++ z_i^-P_-\right),
\qquad z_i^-:=1-z_i^+\,, \qquad z_i^+:=\frac {\varkappa_i} {1+\varkappa_i}\,,
\]
\[
\varkappa_1:=\varkappa , \qquad
\varkappa_2:=\frac{\gamma_2\varkappa}{\gamma_1+\gamma_0(1+\varkappa)}\,, \qquad
p_1:=p, \qquad p_2:=1-p, \]
\[
p:=\frac{\gamma_2(1+\varkappa)}{\gamma_2+
\varkappa\left(\gamma_0+\gamma_2+\gamma_1\right) + \varkappa^2
\left(\gamma_0+\gamma_2\right)}\,.
\]
Let us note that we have $ (1-p)z_2^+=\varkappa p z_1^+$. By recalling that the
system state is the sum of the components \eqref{sysstat}, we get that the average
equilibrium state of the two-level system is
\[
\eta_S^{\mathrm{ eq}}=\eta_1(\infty)+\eta_2(\infty)=p\varkappa
P_++(1-p\varkappa)P_-\,.
\]

\subsection{The stochastic Schr\"odinger equations}

The lSSE \eqref{LSSEi} corresponding to the choices \eqref{eq:model} is
\begin{multline*}
\rmd\zeta_1(t)=\left(K^1+\frac \lambda 2 \right)\zeta_1(t_{-})\rmd t
-\zeta_1(t)\bigl(\rmd N_1^1(t)+\rmd N_2^1(t)\bigr)
\\ {}+\left(\sqrt{\frac
{\gamma_2}{\lambda_2}}\,\sigma_+\zeta_2(t_{-} )-\zeta_1(t_{-} ) \right)\rmd N^2_1(t)
\\ {}+ \rme^{\rmi \nu
t}\sqrt{\gamma_0}\,\sigma_-\zeta_1(t_{-})\left(\sqrt{\epsilon}\,\rmd W_1(t)+
\sqrt{1-\epsilon}\,\rmd W_2(t)\right),
\end{multline*}
\begin{multline*}
\rmd\zeta_2(t)=\left(K^2+\frac \lambda 2 \right)\zeta_2(t_{-})\rmd t
+\left(\sqrt{\frac {\gamma_0\varkappa}{\lambda_0}}\,\zeta_1(t_{-} )-\zeta_2(t_{-} )
\right)\rmd N_2^1(t)\\
{}-\zeta_2(t)\rmd N_1^2(t)
 +\left(\sqrt{\frac {\gamma_1}{\lambda_1}}\,\sigma_-\zeta_1(t_{-} )-\zeta_2(t_{-}
) \right)\rmd N_1^1(t) \\ {}+ \rme^{\rmi \nu
t}\sqrt{\gamma_0}\,\sigma_-\zeta_2(t_{-})\left(\sqrt{\epsilon}\,\rmd W_1(t)+
\sqrt{1-\epsilon}\,\rmd W_2(t)\right),
\end{multline*}
where $\lambda=\lambda^1_1+ \lambda^2_1+ \lambda^1_2$ and
\[
K^1=-\frac{\rmi\omega_1}2\, \sigma_z- \frac{\gamma_0+\gamma_1}2\,P_+ -
\frac{\gamma_0\varkappa}2\,\openone,
\qquad
K^2=-\frac{\rmi\omega_2}2\, \sigma_z- \frac{\gamma_0}2\,P_+ -
\frac{\gamma_2}2\,P_-.
\]
Note that the Wiener processes $W_1$ and $W_2$ appear always in the combination
$\sqrt{\epsilon}\, W_1(t)+ \sqrt{1-\epsilon}\, W_2(t)$, which is again a
one-dimensional standard Wiener process. The reason for the introduction of two
components is that the diffusive term represents the emitted light, which we have
divided in two channels: channel 1, represented by $W_1$, contains the light
reaching the heterodyne detector and channel 2, represented by $W_2$, contains the
lost light. The proportion of lost light is $1-\epsilon$.

Finally, by Eqs.\ \eqref{EDSNL}, the SSE for the normalized vectors is, under the
physical probability,
\begin{multline*}
\rmd\psi_1(t)=V_1(\psi_1(t_-),\psi_2(t_-))\rmd t  -\psi_1(t)\bigl(\rmd N_1^1(t)+\rmd
N_2^1(t)\bigr)\\ {}+\left(\frac {\sigma_+\psi_2(t_{-} )}{\norm{\sigma_+\psi_2(t_{-}
)}} -\psi_1(t_{-} ) \right)\rmd N_1^2(t)
\\
{}+ \sqrt{\gamma_0}\left(\rme^{\rmi \nu
t}\sigma_-\psi_1(t_{-})-\frac{1}{2}\,v(t)\psi_1(t_-)\right)\left(\sqrt{\epsilon}\,\rmd
\hat{W}_1(t)+\sqrt{1-\epsilon}\,\rmd \hat{W}_2(t)\right) ,
\end{multline*}
\begin{multline*}
\rmd\psi_2(t)=V_2(\psi_1(t_-),\psi_2(t_-))\rmd t +\left(\frac {\psi_1(t_{-}
)}{\norm{\psi_1(t_{-} )}}-\psi_2(t_{-} ) \right)\rmd N_2^1(t)\\ {}+\left(\frac
{\sigma_-\psi_1(t_{-} )}{\norm{\sigma_-\psi_1(t_{-} )}}-\psi_2(t_{-} ) \right)\rmd
N_1^1(t) -\psi_2(t_-)\rmd N_1^2(t)
\\
{}+ \sqrt{\gamma_0}\left(\rme^{\rmi \nu
t}\sigma_-\psi_2(t_{-})-\frac{1}{2}\,v(t)\psi_2(t_-)\right)\left(\sqrt{\epsilon}\,\rmd
\hat{W}_1(t)+\sqrt{1-\epsilon}\,\rmd \hat{W}_2(t)\right) ,
\end{multline*}
where $\hat W$ is the new Wiener process introduced in \eqref{hatW}, and
\begin{gather*}
v_1(t)=\sqrt{\gamma_0\epsilon}\,v(t), \qquad
v_2(t)=\sqrt{\gamma_0(1-\epsilon)}\,v(t),
\\ v(t)=2\sum_{k=1}^2\RE\left(\rme^{\rmi \nu t}\langle\psi_k(t_-)|\sigma_- \psi_k(t_-)\rangle\right),
\qquad I_2^1(t)=\varkappa \gamma_0\norm{\psi_1(t_-)}^2,
\\
I_1^1(t)=\gamma_1 \norm{\sigma_-\psi_1(t_-)}^2,
 \qquad I_1^2(t)= \gamma_2 \norm{\sigma_+\psi_2(t_-)}^2,
\end{gather*}
\begin{multline*}
V_j(\psi_1(t_-),\psi_2(t_-))=K^j\psi_j(t_-)+\frac{I_1^1(t) +I_1^2(t)+I_2^1(t)}{2}
\,\psi_j(t_-)\\
{}+\frac {\gamma_0} 2\, v(t)\sigma_- \psi_j(t_-)- \frac {\gamma_0} 4\, v(t)^2
\psi_j(t_-).
\end{multline*}

\subsection{The stochastic master equation and the
heterodyne spectrum}

In the situation we are considering the band transitions cannot be monitored. We
take under observation the system by collecting part of the emitted light in an
apparatus performing heterodyne detection. In this detection scheme the received
light is made to interfere with some monochromatic light of frequency $\nu$; to a
certain extent, this frequency can be varied. As we have said, it is $W_1$ which
represents the light reaching the detector; moreover, the (stochastic) output $J(t)$
of the detector is some smoothed version of $W_1$ \cite[Chapt.\ 7]{BarGreg09}, say
\begin{equation}\label{sm_out}
J(t)=\sqrt{k}\int_0^t \rme^{-k(t-s)/2}\, \rmd W_1(s), \qquad k>0.
\end{equation}

To take into account that only $W_1$ is observed we use the notation of Remark
\ref{rem:obs_out} and we take $d_1'=1$, $m_1'=d_1$, $m_2'=0$; recall that we have
$d_1=m_1=2$, $d_2=0$, $m_2=2$. Then, all the sums with jump processes disappear from
the stochastic master equations \eqref{xx} and \eqref{xxyy}. The linear stochastic
master equation \eqref{xx} becomes
\[
\rmd\sigma_j(t)=\mathcal{K}_j\big(\sigma_1(t),\sigma_2(t)\big)\rmd t
+\sqrt{\gamma_0\epsilon}\left( \rme^{\rmi \nu t}\sigma_-  \sigma_j(t)+ \rme^{-\rmi \nu
t} \sigma_j(t)\sigma_+\right)\rmd W_1(t),
\]
where the $\Kcal_j$ are the operators appearing in the Lindblad rate equations
\eqref{model}. The corresponding non linear stochastic master equation \eqref{xxyy}
for the \emph{a posteriori} states $\rho_j(t)= \sigma_j(t)\big/
\Tr_{\Hscr_S}\{\sigma_1(t)+ \sigma_2(t)\}$ turns out to be
\begin{multline*}
\rmd\rho_j(t)=\mathcal{K}_j\big(\rho_1(t),\rho_2(t)\big)\rmd t
\\ {}+\sqrt{\gamma_0\epsilon}\left( \rme^{\rmi \nu t}\sigma_-  \rho_j(t)+ \rme^{-\rmi \nu
t} \rho_j(t)\sigma_+ - m(t) \rho_j(t)\right)\rmd \hat W_1(t),
\end{multline*}
\[
m(t) = 2\RE \left( \rme^{\rmi \nu t}\Tr_{\Hscr_S} \left\{ \sigma_- \rho_j(t)\right\}\right).
\]

The power of the output current produced by the detector is proportional to $J(t)^2$
and the mean power at large times is proportional to
\begin{equation}\label{power1}
P(\nu)=\lim_{t\to +\infty}\Ebb_{\Pbb^t}[J(t)^2].
\end{equation}
The limit is to be taken in the sense of the distributions in $\nu$.

By using \eqref{sm_out} we get
\[
\Ebb_{\Pbb^t}[J(t)^2]=k\rme^{-kt}\Ebb_{\Pbb^t}\biggl[ \int_0^t \rme^{ks/2}\, \rmd W_1(s)
\int_0^t \rme^{kr/2}\, \rmd W_1(r)\biggr]
\]
So, to obtain an explicit expression for the power first of all we need to compute
the second moments of the Wiener type integrals $\int_0^t \rme^{ks/2}\, \rmd W_1(s)$
under the physical probability. The autocorrelation function of $W_1$, from which
such a moments follow, can be obtained by differentiation of the so called
\emph{characteristic operator} (the Fourier transform of the instruments)
\cite[Proposition 4.16]{BarGreg09}. The formula valid for the Markov case needs only
to be expressed by using the diagonal blocks; from \cite[Eq. (4.47)]{BarGreg09} we
get
\begin{multline}
\Ebb_{\Pbb^t}[J(t)^2]=k\int_0^t \rme^{-k(t-s)}\, \rmd s + 2k \gamma_0\epsilon
\int_0^t \rmd s \int_0^s \rmd r\, \rme^{-k(t-s)/2} \rme^{-k(t-r)/2}
\\ {}\times
\sum_{i,j=1}^2\Tr_{\Hscr_S}\left\{ \left(\rme^{\rmi \nu s} \sigma_-+ \rme^{-\rmi \nu
s} \sigma_+\right) \Tcal_{ij}(s-r)\left[\rme^{\rmi \nu s} \sigma_-\eta_j(r)+
\rme^{-\rmi \nu r}\eta_j(r)\sigma_+\right]\right\}.
\end{multline}
By $\sum_{j=1}^2\Tcal_{ij}(t)[\tau_j]$, $i=1,2$, we denote the solution of the
Lindblad rate equation \eqref{model} with initial condition $(\tau_1,\tau_2)$ at
time 0. Then, the computations needed to obtain $P(\nu)$ are long, but similar to
the ones in \cite[Sect.\ 9.1]{BarGreg09}; we give only the final results:
\begin{equation}\label{power2}
P(\nu)=1+4\pi \epsilon \Sigma(\nu),
\end{equation}
\begin{multline}
\Sigma(\nu)=  2\gamma_0\,\frac{\left(1-p\right)z_2^+\Gamma_2 -pz_1^+ w\left[
\Gamma_2 \left( \gamma_2- \gamma_1
-2\gamma_0\varkappa\right)+4\left(\omega_2-\omega_1\right)\left(\nu-\omega_2\right)\right]
}{\pi\left[4\left(\nu-\omega_2\right)^2+\Gamma_2^{\;2}\right]}
\\ {}+ 2\gamma_0\,\frac{pz_1^+ \left\{\left[ 1+w
\left( \gamma_2- \gamma_1 -2\gamma_0\varkappa\right)\right]
\Gamma_1+4w\left(\omega_2-\omega_1\right)\left(\nu-\omega_1\right)\right\}
}{\pi\left[4\left(\nu-\omega_1\right)^2+\Gamma_1^{\;2}\right]}\,,
\end{multline}
where $p$, $z_j^+$ have already be defined and
\[
\Gamma_1:=\gamma_0+\gamma_1+2\gamma_0\varkappa+k, \qquad
\Gamma_2:=\gamma_0+\gamma_2+k,
\]
\[
w:=\frac{2\gamma_0\varkappa}{4\left(\omega_1-\omega_2\right)^2+ \left( \gamma_2-
\gamma_1 -2\gamma_0\varkappa\right)^2}.
\]
In Eq.\ \eqref{power2} the term 1 is interpreted as the shot noise due to the local
oscillator and $\Sigma(\nu)$ as the heterodyne spectrum. Note that the widths
$\Gamma_j$ contain some dynamical parameters and the instrumental width $k$.

By its definition, we have $P(\nu)\geq 0$, while the positivity of the spectrum
$\Sigma(\nu)$ is not obvious. However, one can check that it is possible to rewrite
$\Sigma(\nu)$ in a form from which its positivity is apparent:
\begin{multline}
\Sigma(\nu)= D\gamma_0 \varkappa\Biggl\{
\frac{\gamma_0(1+\varkappa)+\gamma_1+k}{4\left(\nu-\omega_1\right)^2+\Gamma_1^{\;2}}
+ \frac{ \varkappa
\left(\gamma_2+k\right)}{4\left(\nu-\omega_2\right)^2+\Gamma_2^{\;2}}
\\ {}+ \frac{\gamma_0\varkappa\left(\Gamma_1+\Gamma_2\right)^2}{\left[4\left(\nu-\omega_1\right)^2+
\Gamma_1^{\;2}\right]
\left[4\left(\nu-\omega_2\right)^2+\Gamma_2^{\;2}\right]}\Biggr\},
\end{multline}
\[
D=
\frac{2/\pi}
{1+\varkappa \gamma_1/\gamma_2+ \varkappa(1+\varkappa)(1+\gamma_0/\gamma_2)}.
\]
Note that the heterodyne spectrum, for spontaneous emission in our model, contains
information on the dynamics: all the dynamical parameters, due to the structured
reservoir, determine the form of the spectrum. In particular, we have a double
peaked structure only if $\omega_1$ and $\omega_2$ are sufficiently different and
this difference can be only due to the band structure of the bath.

\section*{Acknowledgments} We thank Prof. F. Petruccione who introduced us to the
problem of unravelling Lindblad rate master equations.

CP acknowledges the financial support of the ANR ``Hamiltonian and Markovian
Approach of Statistical Quantum Physics'' (A.N.R. BLANC no ANR-09-BLAN-0098-01).

\end{document}